\newtheorem{theorem}{Theorem}
\newtheorem*{theorem*}{Theorem}
\newtheorem{corollary}{Corollary}
\newtheorem{prop}{Proposition}
\begin{document}
\title{On the uniqueness of the steady-state solution of the Lindblad-Gorini-Kossakowski-Sudarshan equation.}
\author{Davide Nigro}
\affiliation{Dipartimento di Fisica dell’Università di Pisa and I.N.F.N.\\ Sezione di Pisa, Largo Pontecorvo 3, I-56127 Pisa, Italy}
\begin{abstract}
The aims of this paper are two. The first is to give a brief review of the most relevant theoretical results concerning the uniqueness of the steady-state solution of the Lindblad-Gorini-Kossakowski-Sudarshan master equation and the criteria which guarantee relaxingness and irreducibility of dynamical semigroups. In particular, we test and discuss their physical meaning by considering their applicability to the characterisation of the simplest open quantum system \emph{i.e.} a two-level system coupled to a bath of harmonic oscillators at zero temperature. The second aim is to provide a set of sufficient conditions which guarantees the uniqueness of the steady-state solution and its attractivity. Starting from simple assumptions, we derive simple criteria that can be efficiently exploited to characterise the behavior of dissipative systems of spins and bosons (with truncated Fock space), and a wide variety of other open quantum systems recently studied. 
\end{abstract}
\maketitle

\section{Introduction}
During the last few decades systems of cold atoms, molecules and trapped ions, as well as photons, have been extensively used in so many fields of research that they seem to represent the key to understand and investigate the most fundamental laws of quantum mechanics \cite{Intro3,haroche,Intro1,Intro2,carusotto}. Indeed, thanks to the impressive progress which has been made especially in optics, it is possible to use these systems to engineer microscopic Hamiltonians for the purpose of quantum simulation \cite{Intro3b,hartmann,noh}: by using lasers one can achieve those critical regimes at which the system dynamics becomes completely ruled by principles of quantum mechanics. However, keeping systems in the proper working conditions 
can be quite a hard task: the coupling between the microscopic degrees of freedom and the sorroundings naturally leads to losses. If these losses are negligible, the quantum system can be approximate as an ideal and closed system which follows the standard quantum theory. If it is not possible to neglect the leakage induced by the environment one needs to change point of view and use the formalism of \emph{open} quantum systems.\\
The first examples of a rigorous treatment of the open quantum system dynamics were given during the 70s by Davies while considering a harmonic oscillator \cite{daviesoscillator} and a N-level atom \cite{daviesnlevel} coupled to a heat bath and by Pulè \cite{pule} in the context of a single spin coupled to an infinite bath of harmonic oscillators. During the same years, Kossakowski and Ingarden \cite{kossakowski,ingarden}, while attempting to provide a mathematical framework suitable for the description of the irreversible dynamics proper of the open quantum systems, introduced the concept of \emph{quantum dynamical semigroup}: while in the case of closed systems it is possible to move both ``forward" and ``backwards" in time by using unitary operators which are elements of a one-parameter group, in the case of quantum open systems time-translations form a semigroup \emph{i.e.} elements do not have an inverse.\\
In 1976, Lindblad \cite{Lindblad} and Gorini, Kossakowski and Sudarshan \cite{Gorini} derived the general form of the generator of completely positive quantum dynamical semigroups and introduced the master equation nowadays known as the LGKS master equation which prescribes the time-evolution of a open quantum system weakly coupled to a Markovian environment. Between 1976 and 1978 a series of theoretical papers concerning the \emph{uniqueness} of the solution of the LGKS master equation appeared in literature. In \cite{spohn1} and \cite{spohn2} Spohn gave a set of sufficient conditions for uniqueness based respectively on the properties of the decay-rate matrix and on the properties of the Lindblad operators entering in the generator of the dynamical semigroup.
In \cite{frigerio1} and \cite{frigerio2} Frigerio considered under which sufficient conditions a dynamical semigroup possessing a faithful normal stationary state admits a unique equilibrium state. Frigerio in \cite{frigerio1} also derived, for this particular class of problems, the equivalence between \emph{irreducibility } and the uniqueness of the equilibrium state. Evans in \cite{evans} provided a necessary and suffiecient condition for irreducibility based on the study of von Neumann algebras.\\
 After three decades, the renewed and growing interest for the physics of open quantum systems has led to the birth of two completely new lines of research. The first deals with the development of tools and techniques suitable for the purposes of stabilizing the exotic quantum states observed in equilibrium quantum systems and morever suitable for the synthesis of systems possessing those features required by quantum information protocols. These goals can be achieved combining reservoir engineering with control techniques \cite{jin,biella,lebreuilly,diehl,bardyn,sauer,dengis}. The second line deals with the study and the characterisation of those phenomena proper of the open quantum dynamics. Of particular interest are the transport properties in driven-dissipative systems \cite{biella2,fitzpatrick,debnath} and the analysis of dissipative phase transitions \cite{lee,jin2,ciuti5,rota}, which can be considered as an extension to the open quantum world of the standard quantum phase transitions.\\
Though these two research areas deal with different aspects of the physics of open quantum systems, they do share the need for a precise knowledge about the asymptotic behavior of open quantum systems \emph{i.e.} the properties encoded in the steady-states. 
Such properties and their origin have been extensively investigated during the last decade. Several new interesting results concerning the general structure of Hilbert spaces and the characterisation of the time-evolution of open quantum systems, as well as the role of symmetries and conserved quantities have been published  \cite{baum1,baum2,ticozzi1,ticozzi2,schirmer,baum3,albert,albert2}. 
Nevertheless, as far as we know, the more practical criteria for the characterisation of the asymptotic properties of open quantum systems are those discussed in \cite{spohn1,spohn2,frigerio1,frigerio2,evans}.
\newline
The purposes of this paper are two. The first is to review the classical papers cited above: 
since they provide powerful criteria in an extremely formal way, we do believe that a detailed discussion about their content would be helpful for our community. Our second goal is to provide a simple set of sufficient conditions which guarantees the uniqueness of the steady-state solution. As we will show, they apply to a wide variety of open quantum systems that have been subject of intense studies during the last decade.\\
The paper is organised as follows. In section \ref{sec:theory} we briefly review the main ideas of the formalism of dynamical semigroups, introducing the LGKS equation and its dual expression. In section \ref{sec:review} we review the classical results cited above concerning the uniqueness of the steady-state configurations, the conditions under which a dynamical semigroup is \emph{relaxing} and those which guarantee its \emph{irreducibility}. 
In particular, we discuss their theoretical contents by considering their applicability to the simplest prototypical model describing a open quantum system \emph{i.e.} a two-level system coupled to a bath of harmonic oscillators at zero temperature. In this way, we will show also how the uniqueness of the steady-state is related to the relaxingness and to the irreducibility of dynamical semigroups. In section \ref{sec:theorems} we discuss a new set of sufficient conditions that guarantees the uniqueness of the steady-state solution for both single and composite quantum systems. In section \ref{sec:conclusions} we summarise our results and draw our conclusions.

\section{Theory of dynamical semigroups}\label{sec:theory}
In this section, we proceed in the same way of Lindblad in \cite{Lindblad} who gave a description of the theory of dynamical semigroups starting from the formalism introduced by Kossakowski in \cite{kossakowski} for the description of non-hamiltonian systems.\\
According to the standard formalism used in quantum mechanics, to every quantum system it is possible to associate a separable Hilbert space $\mathbb{H}$ and admissible \emph{states} of our quantum system are represented by a self-adjoint positive semidefinite linear operator of unit trace \emph{i.e.} \emph{density operators}. The set of all the density operators, denoted by $\mathcal{P}(\mathbb{H})$, is a convex set: given any couple of states $\rho_1$,$\rho_2$ $\in$ $\mathcal{P}(\mathbb{H})$, also $\rho_p \equiv p\rho_1 +(1-p)\rho_2$, with $p\in \left[0,\,1\right]$ is a element of $\mathcal{P}(\mathbb{H})$ (in literature this set is also denoted as $\mathcal{T}(\mathbb{H})$, the set of \emph{trace class} operators on $\mathbb{H}$, see \emph{e.g.}  \cite{spohnrev}).
\emph{Observables} correspond to self-adjoint operators on $\mathbb{H}$ and are elements of $\mathcal{B}(\mathbb{H})$, the set of bounded operators on $\mathbb{H}$. The mean value of an observable $O$ at a state $\rho$, that is $\langle O\rangle$, can be obtained evaluating $\mbox{Tr}(O\rho)$, where $\mbox{Tr}(\cdot)$ denotes the trace operation.\\
Let $\mathcal{H}$ be the Hamiltonian of our physical system. The time-evolution of the system is given by the family of two-parameters operators $\mathbb{S}_0(\mathbb{H})=\left\{\Lambda_{t,\,s};\,t>s;\,t,\,s \in \mathbb{R}\right\}$ acting on the set $\mathcal{P}(\mathbb{H})$. $\Lambda_{t,\,s}$ has the form 
\begin{equation}\label{eq:generalgenerator}
\Lambda_{t,\,s}=T\mbox{exp}\left(\int_{s}^{t}\,L_{0}(t')dt'\right),
\end{equation}
where $T$ denotes the time ordering operator, and
\begin{equation}
L_{0}(t)\rho=-i\left[\mathcal{H},\,\rho\right];\quad \rho\in \mathcal{P}(\mathbb{H}),
\end{equation}
is the Liouville operator (also called Liouvillian) of the system ($\hbar=1$) and $\left[A,\,B\right]\equiv AB-BA$ denotes the commutator between the operators $A$ and $B$. The relation (\ref{eq:generalgenerator}) is usually written as the von Neumann equation of motion for the system state $\rho$
\begin{equation}\label{eq:vonNeumannEq}
\frac{d}{dt}(\Lambda_{t,\,s}\rho)=-i\left[\mathcal{H},\,\Lambda_{t,\,s}\rho\right]
\end{equation}
with starting condition $\Lambda_{s,\,s}\rho=\rho$, being $\Lambda_{s,\,s}=\mathbbm{1}$ the identity map. Given the system state $\rho$ at the time $s$, the system state at $t>s$ is given by $\Lambda_{t,\,s}\rho$.\\
In general, $\mathbb{S}_0(\mathbb{H})$ is a \emph{semi-group} under the following composition law
\begin{equation}
\Lambda_{t,\,s}\Lambda_{s,\,u}=\Lambda_{t,\,u};\quad t>s>u,\quad t,s,u\in\mathbb{R},
\end{equation}
A physical system for which its dynamical semi-group $\mathbb{S}_0(\mathbb{H})$ can be extended to a group $\mathbb{G}_0(\mathbb{H})$, with the introduction of the inverse operator $\Lambda^{-1}_{t,\,s}=\Lambda_{s,\,t}\,\forall t,s \in \mathbb{R}$, is called \emph{Hamiltonian system}, since $\mathbb{G}_0(\mathbb{H})$ is completely determined by the Hamiltonian $\mathcal{H}$. In this case the Hamiltonian $\mathcal{H}$ is the generator of the group. This is the case of ordinary (closed) quantum systems. A physical system for which the dynamical semi-group cannot be extended to a group is called \emph{non-Hamiltonian}. In this case we denote the semigroup by $\mathbb{S}(\mathbb{H})$. This is the case of \emph{open quantum systems}, where one has a \emph{total} system consisting of two parts, but is interested in characterising only the dynamics of one of the two subsystems. Let us call these two parts $\mathcal{R}$ and $\mathcal{S}$: $\mathcal{R}$ is the so called \emph{environment} or \emph{reservoir}, while $\mathcal{S}$ denotes the subsystem of interest, the \emph{open} part of the total system which is coupled to $\mathcal{R}$. While the total system can be considered as a Hamiltonian system, $\mathcal{S}$ suffers a time evolution which is non-Hamiltonian. This means that, if we consider a state of the total system $\rho$, the reduced density matrix $\rho_{\mathcal{S}}=\mbox{Tr}_{\mathcal{R}}\left[\rho\right]$, which describes the state of the subsystem $\mathcal{S}$ as a part of the larger system $\mathcal{S}+\mathcal{R}$, $\rho_{\mathcal{S}}$ does not evolve in time according to the von Neumann equation. 
If one considers \emph{temporarily homogeneous} semigroups (Markov approximation), that is semigroups in which all the elements $\Lambda_{t,\,s} \in \mathbb{S}(\mathbb{H})$ ($t>s$) are functions only of $t-s$ so that the composition law becomes 
\begin{equation}
\Lambda_{t}\Lambda_{s}=\Lambda_{t+s};\quad t,s\geq 0,\quad \mbox{being}\,\Lambda_{0}=\mathbbm{1},
\end{equation}
the time-evolution of the reduced density matrix $\rho_{\mathcal{S}}$ can be casted in the following form
\begin{equation}\label{eq:lindbladequation}
\frac{d}{dt}(\Lambda_{t}\rho_{\mathcal{S}})=L(\Lambda_{t}\rho_{\mathcal{S}}),\quad\Lambda_{0}\rho_{\mathcal{S}}=\rho_{\mathcal{S}},
\end{equation}
where $L$ is the generator of the dynamical semigroup $\mathbb{S}(\mathbb{H}_{\mathcal{S}})$.\\
The expression of $L$ has been independently derived in the 70s by Lindblad \cite{Lindblad} and Gorini, Kossakowski and Sudarshan \cite{Gorini}. In general, the rhs of Eq.(\ref{eq:lindbladequation}) has the following expression
\begin{equation}\label{eq:lindbladdissipator}
\begin{split}
L(\rho_{\mathcal{S}})=&-i\left[\mathcal{H}_{\mathcal{S}},\,\rho_{\mathcal{S}}\right]+ \\
&+\sum_{i \in I}\gamma_{i}\left[B_i\rho_{\mathcal{S}}B^{\dagger}_i-\frac{1}{2}\left(B^{\dagger}_i B_i\rho_{\mathcal{S}} +\rho_{\mathcal{S}}B^{\dagger}_i B_i\right)\right],\\
\end{split}
\end{equation}
where $\mathcal{H}_{\mathcal{S}}$ is the Hamiltonian of the subsystem $\mathcal{S}$, $\left\{\gamma_{i};\,i\in I\right\}$ are positive decay rates and $\left\{B_i;\,i\in I\right\}$ are operators acting on $\mathbb{H}_{\mathcal{S}}$ which are usually called \emph{Lindblad operators}, $I$ is a set of indices and $B^{\dagger}_{i}$ denotes the adjoint of $B_{i}$ which in finite dimension is the conjugate transposed of $B_{i}$. The decay rates and the explicit form of the Lindblad operators depend both on the particular global system under study. People usually refers to Eq.(\ref{eq:lindbladdissipator}) as the Lindblad or LGKS master equation.\\
For completeness we also report the dual expression of Eq.(\ref{eq:lindbladdissipator}) which reads
\begin{equation}\label{eq:lindbladdissipatorafjoint}
\begin{split}
L^{*}(O_{\mathcal{S}})=& i\left[\mathcal{H}_{\mathcal{S}},\,O_{\mathcal{S}}\right]+\\
&+\sum_{i\in I}\gamma_{i}\left[B^{\dagger}_i O_{\mathcal{S}} B_i-\frac{1}{2}\left(B^{\dagger}_i B_i O_{\mathcal{S}} + O_{\mathcal{S}} B^{\dagger}_i B_i\right)\right],\\
\end{split}
\end{equation}
being $O_{S}$ $\in$ $B(\mathbb{H}_{\mathcal{S}})$. While Eq.(\ref{eq:lindbladdissipator}) generates the evolution of states in Schr\"{o}dinger representation, Eq.(\ref{eq:lindbladdissipatorafjoint}) describes the time-evolution of operators in Heisenberg picture.\\
\section{A lesson from the simplest open quantum system}\label{sec:review}
After the derivation of the LGKS master equation several theoretical results concerning the general properties of open quantum systems have been published. However, the major part of these papers is written in a formal way and it is often complex to truly undenstand their importance and moreover their applicability. Here, we will consider the most relevant for our purposes. In particular, what we are going to perform is an explicit analysis: we will discuss in details the theoretical assumptions on which these results are based and moreover how they are related to each other. This latter task is done by considering the LGKS equation governing the open dynamics of a two-level system coupled to a bath of harmonic oscillators at zero temperature. As we will show some of them do not apply to this particular problem. Nevertheless, as shown at the end of this section, they can be exploited to characterise efficiently the problem at temperature different from zero.\\
A two-level system coupled to a bath of harmonic oscillators can be described by the following Hamiltonian \cite{petruccionebook}
\begin{equation}\label{eq:twolevelsystemhamiltonian}
\mathcal{H}=\mathcal{H}_{S}+\mathcal{H}_{R}+\mathcal{H}_{I},
\end{equation}
where $\mathcal{H}_{S}$, $\mathcal{H}_{R}$ and $\mathcal{H}_{I}$ denote respectivly the two-level system hamiltonian, the reservoir hamiltonian and the system-reservoir interaction term. While we do not assume any particular form for $\mathcal{H}_{S}$, we do assume the following expressions for the reservoir and system-reservoir hamiltonians:
\begin{equation}\label{eq:twolevelsystem}
\mathcal{H}_{\mathcal{R}}=\sum_{\textbf{k}}\omega_{k}b_{\textbf{k}}^{\dagger}b_{\textbf{k}},
\quad\mathcal{H}_{I}=\sum_{\textbf{k}}\left(g_{\textbf{k}}\sigma^{+}b_{\textbf{k}}+g^{*}_{\textbf{k}}\sigma^{-}b_{\textbf{k}}^{\dagger}\right).
\end{equation}
The $\textbf{k}-$th mode of the reservoir has energy $\omega_{\textbf{k}}$ and is described by a couple of creation and annihilation operators denoted by $b_{\textbf{k}}^{\dagger}$ and $b_{\textbf{k}}$ satisfying standard commutation rules. The two-level system is coupled linearly to the $\textbf{k}-$th mode with a coupling strength $g_{\textbf{k}}$. The two matrices $\sigma^+$ and $\sigma^-$ are the \emph{ladder operators} for the spin-1/2 
\begin{equation}\label{eq:matrices}
\sigma^{+} = \left( \begin{array}{cc}
0 & 1 \\
0 & 0  \end{array} \right),\quad
\sigma^{-} = \left( \begin{array}{cc}
0 & 0 \\
1 & 0  \end{array} \right).
\end{equation}
The Hamiltonian (\ref{eq:twolevelsystemhamiltonian}) can be efficiently exploited to mimick the time-evolution of a two-level atom interacting with the radiation field. In this case, the meaning of the Hamiltonian terms (\ref{eq:twolevelsystemhamiltonian}) is clear: the first two terms describe the free evolution of the atom and the radiation field respectively, while the latter describes absorption and emission processes which involve both matter and photons.\\
Under the Born-Markov approximation and by considering a reservoir at zero temperature $T=0$, the time evolution of the two-level system can be casted in a LGKS fashion:
\begin{equation}\label{eq:LindbladtwolevelsystemDECAY1}
\frac{d}{dt}\rho_{\mathcal{S}}=-i\left[\mathcal{H}^{'}_{S},\rho_{\mathcal{S}}\right]+\gamma\left[\sigma^{-}\rho_{\mathcal{S}}\sigma^{+}-\frac{1}{2}\left\{\sigma^{+}\sigma^{-},\rho_{\mathcal{S}}\right\}\right],
\end{equation}
where $\rho_{\mathcal{S}}$ describes the reduced-density matrix of our system, $\mathcal{H}^{'}_{S}=\mathcal{H}_{S}+\delta \mathcal{H}$ is the two-level Hamiltonian to which we add a Lamb-shift term, $\gamma$ is the decay rate and $\left\{A,\,B\right\}\equiv A B + B A$ denotes the \emph{anticommutator} of the two operators $A$ and $B$.\\
By experience, we know that as the time goes to infinity our atom will end up in the same equilibrium condition \emph{i.e.} there exists a unique steady-state configuration $\rho_{ss}$ satisfying the following equation
\begin{equation}\label{eq:steadystatedef}
L(\rho_{ss})=0.
\end{equation}
Moreover, this happens for any starting condition of our system: no matter what was the atomic state at $t=0$, 
for times much longer than relevant timescales our two-level system will be driven by the radiation field to the same steady-state configuration. However, it is worth noting that this behavior is not \emph{a priori} guaranted for any system. Formally, this happens if the dynamical semigroup governing the open dynamics is \emph{relaxing}. A semigroup $\Lambda_{t}$ is called \emph{relaxing} if there exists a state $\rho_{\infty}\in \mathcal{P}(\mathbb{H})$ such that for every state $\rho\in \mathcal{P}(\mathbb{H})$ 
\begin{equation}
\lim_{t\to +\infty}\Lambda_{t}\rho=\rho_{\infty}.
\end{equation}
While the relaxingness of a dynamical semigroup implies by definition the uniqueness of its steady-state configuration, the converse is not \emph{a priori} true: if there exists a unique solution of the equation (\ref{eq:steadystatedef}), \emph{a priori} it does not imply that any starting configuration will approach it as time goes by. Indeed, if $L$ has pure imaginary eigenvalues, the dynamics allows circular closed paths in $\mathcal{P}(\mathbb{H})$, meaning that there are starting configurations of our system that will not approach to a equilibrium state in the long time limit, but will keep moving along a closed trajectory forever. However, in \cite{baum1} and \cite{schirmer} the authors showed that it is not possible to have such paths. We stress that their result is extremely relevant: it guarantees that the uniqueness of the steady-state is a condition equivalent to relaxingness. As a consequence, uniqueness criteria actually provide not only information about the asymptotic properties of a system, but also on the system dynamics: provided that the the steady-state configuration is unique, any starting configuration will approach it.\\
Let us now analyse the classical papers cited in the introduction, looking for a theoretical result that guarantees what we know about the dynamics of our two-level system. The first we consider here is discussed by Spohn in \cite{spohn1}:
\begin{theorem*}
Let $\mathbb{H}$ be a finite dimensional Hilbert space with $\mbox{dim}(\mathbb{H})=N$ and let the Louvillian superoperator $L$ be given by
\begin{equation}\label{eq:GKSrepresentation}
\begin{split}
L(\rho_{\mathcal{S}})=&-i\,\left[\mathcal{H},\,\rho_{\mathcal{S}}\right]+\\
&+\frac{1}{2}\sum_{i,\,j=1}^{N^2-1}c_{i,j}\left([G_i,\,\rho_{\mathcal{S}}G^{\dagger}_j]+[G_i\rho_{\mathcal{S}},\,G^{\dagger}_j]\right)\\,
\end{split}
\end{equation} 
with $H=H^{\dagger}$, $\mbox{Tr}[G_i]=0$,$\mbox{Tr}[G^{\dagger}_i\,G_j]=\delta_{i,\,j}$ and being ${c_{i,\,j}}$ a complex positive $(N^2-1)\times (N^2-1)$ matrix.\\
Let the positive matrix ${c_{i,\,j}}$ have a p-fold degenerate eigenvalue zero. If $p<N/2$, then the dynamical semigroup is relaxing. In particular, if the matrix ${c_{i,j}}$ is strictly positive, then the dynamical semigroup is relaxing.
\end{theorem*}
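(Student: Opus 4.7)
My plan rests on three ingredients: a canonical-form reduction, the equivalence between uniqueness of the steady state and relaxingness that was just recalled, and an irreducibility argument that turns the spectral hypothesis on $c_{i,j}$ into a dimension count.

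First, I would diagonalize the positive Hermitian matrix $c_{i,j}$ by a unitary rotation acting on the basis $\{G_i\}$. This preserves tracelessness and orthonormality, so it yields an equivalent dissipator in standard LGKS form with new Lindblad operators $\{B_k\}_{k=1}^{N^2-1}$ satisfying $\mathrm{Tr}\,B_k=0$, $\mathrm{Tr}(B_k^\dagger B_j)=\delta_{k,j}$, and diagonal rates $\gamma_k\ge 0$. The hypothesis translates into: exactly $p$ of the $\gamma_k$ vanish, so the dissipator is effectively driven by $m:=N^2-1-p$ linearly independent traceless Lindblad operators.

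Next, by the Baumgartner--Narnhofer--Thirring/Schirmer observation recalled in the paragraph just above the theorem, $L$ has no purely imaginary nonzero eigenvalues in finite dimension, so relaxingness is equivalent to uniqueness of the steady state. Invoking the Frigerio theorem mentioned in the introduction, I would further reduce uniqueness to irreducibility of the semigroup, characterized (Davies, Evans) by the absence of a proper subspace $V\subset\mathbb{H}$ jointly invariant under the effective Lindblad operators. Assuming by contradiction such a $V$ of dimension $d\in\{1,\dots,N-1\}$ exists, in a basis adapted to $V\oplus V^\perp$ each effective $B_k$ takes block-triangular form, and hence lies in a linear subspace of $\mathcal{B}(\mathbb{H})$ whose dimension is an explicit function of $d$ and $N$. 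Combining with tracelessness and linear independence of the $m$ operators yields a lower bound $p\ge f(d,N)$; minimizing $f$ over $d$ produces the threshold on $p$ below which no such $V$ can exist.

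The hard part, and the step I would expect to occupy most of the work, is identifying the precise notion of joint invariance Spohn uses in order to match the stated threshold $p<N/2$: a naive count using only the block-triangular structure coming from $B_k V\subset V$ gives a weaker threshold, so one must exploit additional structure inherited from the GKS form. I would look for this additional structure in the positivity and symmetry of the full dissipator (which forces $V^\perp$ to behave coherently with $V$ under the symmetrized action $B_k\rho B_k^\dagger-\tfrac12\{B_k^\dagger B_k,\rho\}$), as well as in the orthonormality relations $\mathrm{Tr}(B_k^\dagger B_j)=\delta_{k,j}$, which constrain how many $B_k$ can simultaneously live in a given block subalgebra. The strict-positivity corollary ($p=0$) then falls out immediately from the same count: the $N^2-1$ traceless $B_k$ span the whole traceless part of $\mathcal{B}(\mathbb{H})$, so no proper invariant subspace is possible and irreducibility is automatic.
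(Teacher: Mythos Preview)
The paper does not contain a proof of this statement. The theorem is quoted verbatim from Spohn's 1976 paper \cite{spohn1} as part of the review in Section~\ref{sec:review}; the author only states it and then shows, via the explicit computation of the matrix $c_{i,j}$ in Eq.~(\ref{eq:cmatrix}), that its hypothesis fails for the two-level master equation (\ref{eq:LindbladtwolevelsystemDECAY1}). There is therefore no ``paper's own proof'' to compare your attempt against.

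That said, your strategy has a structural gap worth flagging. You propose to reduce relaxingness to irreducibility and then obtain irreducibility from a dimension count on block-triangular matrices. But the equivalence you invoke (Frigerio) requires a faithful stationary state, which is not part of Spohn's hypotheses; and the Evans irreducibility criterion used in this paper involves the commutant of $\{\mathcal{H},B_i,B_i^\dagger\}$, not merely a subspace jointly invariant under the $B_i$ alone. Your naive count on $B_kV\subset V$ gives, for a $d$-dimensional $V$, the bound $p\ge d(N-d)$, whose minimum over $1\le d\le N-1$ is $N-1$, not $N/2$; you acknowledge this mismatch but the ``additional structure'' you hope to extract from positivity and orthonormality is not specified, and it is not clear it can close the gap. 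Spohn's original argument does not go through irreducibility at all: it is a direct spectral analysis of $L$ restricted to traceless matrices, using the strict positivity of enough of the dissipative channels to push the spectrum into the open left half-plane. If you want to reconstruct a proof, that is the route to follow.
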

This theorem provides a simple sufficient condition for having a relaxing dynamical semigroup and as a consequence a unique steady-state configuration. However, we will show that it does not apply for example to the master equation (\ref{eq:LindbladtwolevelsystemDECAY1}). In order to prove this fact, we need the explicit expression of the $\{c_{i,j}\}$ matrix which can be derived explicitly as discussed below.
The representation given in (\ref{eq:GKSrepresentation}) is equivalent to (\ref{eq:lindbladdissipator}). In terms of the complete set $\{\mathbbm{1},\,G_i\}$ any operator $B_k$ can be decomposed as:\\
\begin{equation}\label{eq:lindbladdecomp}
B_k=\alpha_0^{(k)}\mathbbm{1}+\sum_{j=1}^{N^2-1}\,\alpha^{(k)}_{j}G_j,
\end{equation}
being $\alpha_{0}^{(k)}=\mbox{Tr}[B_{k}]$ and $\alpha_{j}^{(k)}=\mbox{Tr}[B_{k}G_j]$. Plugging (\ref{eq:lindbladdecomp}) in (\ref{eq:lindbladdissipator}), one obtains the representation (\ref{eq:GKSrepresentation}), where the matrix ${c_{i,\,j}}$ is defined by:\\
\begin{equation}
c_{i,\,j}=\sum_{k\in I}\gamma_{k}\alpha^{(k)}_{i}  \alpha^{(k)*}_{j}.  
\end{equation}
Let us apply this result to the master equation (\ref{eq:LindbladtwolevelsystemDECAY1}). In this case, the complete set we can use is given by $\{\mathbbm{1};\,\sigma^{x}/\sqrt{2},\,\sigma^{y}/\sqrt{2},\,\sigma^{z}/\sqrt{2}\}$, being $\sigma^x$,$\sigma^y$ and $\sigma^z$ the Pauli matrices
\begin{equation}
\sigma^{x} = \left( \begin{array}{cc}
0 & 1 \\
1 & 0  \end{array} \right),\quad
\sigma^{y} = \left( \begin{array}{cc}
0 & -i \\
i & 0  \end{array} \right),\quad
\sigma^{z} = \left( \begin{array}{cc}
1 & 0 \\
0 & -1  \end{array} \right).
\end{equation}
In this basis, the Lindblad operator $\sigma^{-}$ has coordinates $(\alpha_0,\alpha_1,\alpha_2,\alpha_3)=(0,1/\sqrt{2},-i/\sqrt{2},0)$ and the corresponding ${c_{i,\,j}}$ matrix reads
\begin{equation}\label{eq:cmatrix}
{c_{i,\,j}}= \left( \begin{array}{ccc}
\gamma/2 & i\gamma/2 & 0\\
-i\gamma/2 &\gamma/2 & 0 \\
0 & 0 & 0 \end{array} \right)
\end{equation}
Since the first and the second row of (\ref{eq:cmatrix}) are proportional, the matrix ${c_{i,\,j}}$ has a 2-fold degenerate eigenvalue zero. This means that in this case the hypothesis required by Spohn cannot be satisfied, making this theoretical result not applicable to this problem ($p>N/2$).\\
In \cite{spohn2} Spohn provided the following sufficient condition for having a relaxing semigroup:
\begin{theorem*} Given a Louvillian superoperator $L$, \emph{if} the linear span of the Lindblad generators entering in $L$, that is $lsp\{B_{i};\,i\in I\}$, is a self-adjoint set and \emph{if} the \emph{bicommutant} of these operators is equal to the set of all the bounded operators $\mathcal{B}(\mathbb{H})$ on the Hilbert space $\mathbb{H}$, that is $\{B_{i};\,i\in I\}''=\mathcal{B}(\mathbb{H})$, then the semigroup is relaxing.
\end{theorem*}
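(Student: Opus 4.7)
The plan is to prove the theorem by establishing that the kernel of the dual generator $L^{*}$ is one-dimensional and spanned by the identity. Since in finite dimension at least one stationary state always exists (compactness of $\mathcal{P}(\mathbb{H})$ and continuity of $\Lambda_{t}$), triviality of the fixed-point set is equivalent to uniqueness of $\rho_{ss}$; the absence of purely imaginary eigenvalues of $L$ quoted earlier from \cite{baum1,schirmer} then upgrades uniqueness to relaxingness, so the whole argument can focus on controlling $\ker L^{*}$.

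The central tool is the noncommutative ``carr\'{e} du champ'' associated with $L^{*}$. A direct expansion using (\ref{eq:lindbladdissipatorafjoint}) shows that the Hamiltonian contribution cancels and the dissipative part consolidates into
\begin{equation*}
L^{*}(X^{\dagger}X)-X^{\dagger}L^{*}(X)-L^{*}(X^{\dagger})X=\sum_{i\in I}\gamma_{i}\,[X,B_{i}]^{\dagger}[X,B_{i}],
\end{equation*}
which is manifestly a nonnegative operator. If $X\in\ker L^{*}$ and, moreover, $X^{\dagger}X\in\ker L^{*}$, the left-hand side vanishes and positivity of the right-hand side forces $[X,B_{i}]=0$ for every $i\in I$.

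The main work is therefore to verify that $\ker L^{*}$ is stable under $X\mapsto X^{\dagger}X$, i.e., that it is a $*$-algebra. Closure under the adjoint is immediate since $L^{*}(X^{\dagger})=L^{*}(X)^{\dagger}$. For closure under $X\mapsto X^{\dagger}X$ I would invoke the Kadison--Schwarz inequality for the unital completely positive semigroup $\Lambda_{t}^{*}$, namely $\Lambda_{t}^{*}(X^{\dagger}X)\geq\Lambda_{t}^{*}(X)^{\dagger}\Lambda_{t}^{*}(X)=X^{\dagger}X$ for $X\in\ker L^{*}$. Pairing with a stationary state $\rho_{ss}$ gives $\mathrm{Tr}[\rho_{ss}\,\Lambda_{t}^{*}(X^{\dagger}X)]=\mathrm{Tr}[\rho_{ss}\,X^{\dagger}X]$, so the nonnegative operator $\Lambda_{t}^{*}(X^{\dagger}X)-X^{\dagger}X$ has vanishing expectation on $\rho_{ss}$; restricting to the support of $\rho_{ss}$, where $\rho_{ss}$ is faithful, forces it to vanish, and differentiating at $t=0$ yields $X^{\dagger}X\in\ker L^{*}$.

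Once the $*$-algebra property is secured the conclusion is quick. The commutation relations $[X,B_{i}]=0$ together with the self-adjointness of $\mathrm{lsp}\{B_{i}\}$ give $X\in\{B_{i};\,i\in I\}'$, and the hypothesis $\{B_{i};\,i\in I\}''=\mathcal{B}(\mathbb{H})$ via von Neumann's double commutant theorem is exactly the statement $\{B_{i};\,i\in I\}'=\mathbb{C}\mathbbm{1}$, whence $\ker L^{*}=\mathbb{C}\mathbbm{1}$. The hardest step is clearly the third one: the dissipation identity is a line of algebra and the bicommutant reduction is automatic, but turning the Kadison--Schwarz inequality into honest invariance of $X^{\dagger}X$ requires faithfulness of a stationary state and, when that fails, a careful reduction to the support projection of a chosen fixed density matrix. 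This is precisely the point where the algebraic hypotheses of the theorem do most of their work.
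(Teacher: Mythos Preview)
The paper does not prove this theorem: it is quoted from Spohn \cite{spohn2} as part of the literature review in Section~\ref{sec:review}, with only an explanation of the hypotheses and a check that they fail for the zero-temperature two-level model. There is therefore no proof in the paper against which to compare your attempt.

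Judged on its own, your outline follows the Frigerio strategy and is largely correct, but the step you yourself flag as hardest is a genuine gap. From Kadison--Schwarz and invariance of $\rho_{ss}$ you obtain only that the nonnegative operator $\Lambda_{t}^{*}(X^{\dagger}X)-X^{\dagger}X$ vanishes on the \emph{support} of $\rho_{ss}$; this yields $P\,L^{*}(X^{\dagger}X)\,P=0$ for the support projection $P$, not $L^{*}(X^{\dagger}X)=0$, so closure of $\ker L^{*}$ under $X\mapsto X^{\dagger}X$ is not established and the dissipation identity does not yet force $[X,B_{i}]=0$. You write that ``this is precisely the point where the algebraic hypotheses of the theorem do most of their work,'' but you never say how. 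In fact the hypotheses should be used \emph{earlier}, to show that any stationary state is faithful: from $L(\rho_{ss})=0$ one checks that each $B_{i}^{\dagger}$ preserves $\ker\rho_{ss}$, hence each $B_{i}$ preserves $\mathrm{ran}\,P$; self-adjointness of the span then forces each $B_{i}$ to preserve $\ker\rho_{ss}$ as well, so $P\in\{B_{i}\}'=\mathbb{C}\mathbbm{1}$ and $P=\mathbbm{1}$. With faithfulness secured your Kadison--Schwarz step becomes valid and the rest follows. (Incidentally, self-adjointness is not needed in your final step: $[X,B_{i}]=0$ already gives $X\in\{B_{i}\}'$, and $\{B_{i}\}''=\mathcal{B}(\mathbb{H})$ alone implies $\{B_{i}\}'=\mathbb{C}\mathbbm{1}$ by taking one more commutant.)
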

The first hypothesis can be easily checked. The linear span of a set $\{B_{i};\,i\in I\}$ is \emph{self-adjoint} if given any linear combination of the operators $V=\sum_{j}\alpha_jB_j$ in the span, also its adjoint $V^{\dagger}=\sum_{j}\alpha^*_jB^{\dagger}_j$ is an element of the span, being the $\alpha_{j}\in \mathbb{C}$. This constraint is satisfied if (\emph{i}) all the Lindblad operators are self-adjoint \emph{i.e.} $B_{i}=B^{\dagger}_{i}\,\forall\,i\in I$, or if (\emph{ii})  for any $B_{i}$ with $B_{i}\neq B^{\dagger}_{i}$, there is some $j\in I$ such that $B_{j}=B^{\dagger}_{i}$.\\ Satisfying the second hypothesis is more demanding. The \emph{commutant} $\{B_{i};\,i\in I\}'$ of a set $\{B_{i};\,i\in I\}$ is defined as the set of operators simultaneously commuting with all the operators $B_{i}$: $M$ is a element of $\{B_{i};\,i\in I\}'$ if and only if the commutator $[M,\,B_{i}]=0$,$\forall\,i\in\,I$. The \emph{bicommutant} of the set $\{B_{i};\,i\in I\}$, here denoted by $\{B_{i};\,i\in I\}''$, is the commutant of $\{B_{i};\,i\in I\}'$. This means that in order to calculate the bicommutant of a set one needs first to compute its commutant, and then repeat the process on a totally new set of operators.\\
Let us try to apply this result to the master equation (\ref{eq:LindbladtwolevelsystemDECAY1}), forgetting about the second hypothesis provided by Spohn. The results showed below are based on the theory of commutants, so the subject will be analised in details later. The dynamics encoded in the Eq. (\ref{eq:LindbladtwolevelsystemDECAY1}) is characterised by a single Lindblad operator \emph{i.e.} $\sigma^{-}$.
Since $\sigma^-\neq(\sigma^-)^{\dagger}=\sigma^{+}$, we easily see that the first hypothesis disussed above cannot be fulfilled.\\
In \cite{frigerio1} Frigerio provided a sufficient condition for the equivalence between the uniqueness of the stationary state and the triviality of the commutant $\{\mathcal{H},\,B_{i},\,B_{i}^{\dagger};\,i\in I\}'$ \emph{i.e.}  $\{\mathcal{H},\,B_{i},\,B_{i}^{\dagger};\,i\in I\}'=\lambda\,\mathbbm{1}$ . However, as pointed out by Spohn in \cite{spohnrev}, in \cite{frigerio1} one needs to assume by hypothesis that the semigroup has a faithful invariant state \emph{i.e.} a maximum rank steady-state. Therefore, \emph{a priori} this condition restricts considerably the applicability of this theoretical result.\\
In \cite{frigerio2} the author generalised the result given by Spohn in \cite{spohn2}, showing that if the dynamical semigroup $\Lambda_t$ has at least one stationary state $\bar{\rho}$ and if the linear span $lin\{B_{i};\,i\in I\}$ is a self-adjoint set with $\{B_{i};\,i\in I\}'=\lambda\,\mathbbm{1}$, then $\bar{\rho}$ is faithful and we have 
\begin{equation}
\lim_{t\to +\infty}\mbox{Tr}[A\,\Lambda_t\rho]\,=\,\mbox{Tr}[A\bar{\rho}]
\end{equation}
for all $A\,\in\mathcal{B}(\mathbb{H})$ and for all $\rho\,\in\mathcal{P}(\mathbb{H})$.\\
For the same reasons discussed above, since $\sigma^{-}$ is not self-adjoint, the hypothesis required in \cite{frigerio2} cannot be fulfilled by the master equation (\ref{eq:LindbladtwolevelsystemDECAY1}).\\
The last result we discuss here is \cite{evans}. Therein, Evans provides in a extremely formal way a necessary and sufficient condition for the irreducibility of a dynamical semigroup:
\begin{theorem*}
Let $\mathcal{M}$ be a von Neumann algebra on the Hilbert space $\mathbb{H}$, which is globally invariant under the semigroup $\Lambda_{t}=e^{tL}$, being the Louvillian superoperator $L$ given by
\begin{equation}\label{eq:lindbladEvans}
L(X)=V(X)+K^{\dagger}\,X+X\,K,\quad \mbox{with}\,X\in\mathcal{B}(\mathbb{H}),
\end{equation} 
and being
\begin{equation}\label{eq:Vequation}
V(X)=\int_{\Omega}A(\omega)^{\dagger}XA(\omega)\mbox{d}\mu(\omega),\quad K=i\mathcal{H}-\frac{1}{2}V(\mathbbm{1}).
\end{equation}
If $T_t$ is the induced W*-dynamical semigroup on M, then the algebra $\mathcal{M}(T)=\{X\in M:T_{t}(X^{\dagger}X)=X^{\dagger}X,\,T_{t}(X)=X,\,\forall t\geq 0\}$ is equal to $\mathcal{M}\cap \{A(\omega),\,K\}'$. Thus $T_t$ is irreducible if and only if $\mathcal{M}\cap \{A(\omega),\,A(\omega)^{\dagger},\,\mathcal{H}\}'=\mathbb{C}\mathbbm{1}$.
\end{theorem*}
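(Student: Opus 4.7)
The plan is to work with the \emph{dissipation form} attached to $L$ and reduce everything to a pair of kernel equations. For $X\in\mathcal{B}(\mathbb{H})$ define
\[
D(X):=L(X^{\dagger}X)-L(X)^{\dagger}X-X^{\dagger}L(X).
\]
Expanding $D(X)$ with the given form of $L$ and using the identity $K+K^{\dagger}=-V(\mathbbm{1})$ (immediate from the definition of $K$), the contributions involving $K$ should cancel against the term $X^{\dagger}(K+K^{\dagger})X$, and the remainder is expected to collapse to
\[
D(X)=\int_{\Omega}[X,A(\omega)]^{\dagger}[X,A(\omega)]\,d\mu(\omega)\ge 0.
\]
In particular $D(X)=0$ will force $[X,A(\omega)]=0$ for $\mu$-almost every $\omega$, and evaluating at $X=\mathbbm{1}$ shows $L(\mathbbm{1})=0$, so $T_{t}$ is unital.

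For the equality $\mathcal{M}(T)=\mathcal{M}\cap\{A(\omega),K\}'$, I first translate the invariance conditions $T_{t}(Y)=Y$ into the kernel conditions $L(Y)=0$ using strong continuity of $T_{t}=e^{tL}$, so that $X\in\mathcal{M}(T)$ becomes the pair $L(X)=0$ and $L(X^{\dagger}X)=0$. Subtracting gives $D(X)=0$, hence $[X,A(\omega)]=0$ a.e.; substituting back yields $V(X)=\int A(\omega)^{\dagger}A(\omega)X\,d\mu(\omega)=V(\mathbbm{1})X$, and a second use of $K+K^{\dagger}=-V(\mathbbm{1})$ collapses $L(X)$ to $-[K,X]$, so $[K,X]=0$ as well. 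The reverse inclusion runs the same calculation forward: if $X\in\mathcal{M}\cap\{A(\omega),K\}'$, then $D(X)=0$ and $L(X)=-[K,X]=0$, hence $L(X^{\dagger}X)=L(X)^{\dagger}X+X^{\dagger}L(X)+D(X)=0$, and re-exponentiating restores both invariance equations.

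For the irreducibility statement I restrict the identity to projections: a $T_{t}$-invariant projection $P\in\mathcal{M}$ satisfies $P^{\dagger}P=P$, hence $P\in\mathcal{M}(T)$, and conversely, so irreducibility is equivalent to the triviality of the projection lattice of $\mathcal{M}\cap\{A(\omega),K\}'$. The key self-adjointness observation is that if $P=P^{\dagger}$ and $[P,A(\omega)]=0$, taking adjoints yields $[P,A(\omega)^{\dagger}]=0$, and $[P,K]=0$ similarly forces $[P,K^{\dagger}]=0$; since $\mathcal{H}=(K-K^{\dagger})/(2i)$ and $V(\mathbbm{1})=\int A(\omega)^{\dagger}A(\omega)\,d\mu(\omega)$, commuting with $\{A(\omega),K\}$ is, on self-adjoint elements, equivalent to commuting with $\{A(\omega),A(\omega)^{\dagger},\mathcal{H}\}$. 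The projection lattices of $\mathcal{M}\cap\{A(\omega),K\}'$ and of the honest von Neumann algebra $\mathcal{M}\cap\{A(\omega),A(\omega)^{\dagger},\mathcal{H}\}'$ therefore coincide, and a von Neumann algebra has only the trivial projections iff it equals $\mathbb{C}\mathbbm{1}$, closing the equivalence. The main delicate point will be the measure-theoretic bookkeeping in the step $D(X)=0\Rightarrow[X,A(\omega)]=0$ $\mu$-a.e.; in the bounded setting implicit in the statement this is routine, but unbounded $A(\omega)$ would demand extra domain hygiene.
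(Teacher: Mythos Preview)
The paper does not supply a proof of this statement: Evans' theorem is quoted in Section~\ref{sec:review} as a result from \cite{evans}, discussed, and then \emph{applied} to the two-level example, but never proved. The appendix proves only the paper's own Theorems~1--2 and Propositions~1--2, all of which \emph{use} Evans' criterion as a black box. So there is no ``paper's own proof'' to compare against.

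That said, your argument is correct and is essentially Evans' original line of reasoning. The computation of the dissipation form
\[
D(X)=L(X^{\dagger}X)-L(X)^{\dagger}X-X^{\dagger}L(X)=\int_{\Omega}[X,A(\omega)]^{\dagger}[X,A(\omega)]\,d\mu(\omega)
\]
is right (the $K$-terms cancel exactly as you say via $K+K^{\dagger}=-V(\mathbbm{1})$), and the two inclusions for $\mathcal{M}(T)=\mathcal{M}\cap\{A(\omega),K\}'$ follow cleanly from $D(X)=0\Rightarrow[X,A(\omega)]=0$ and the collapse $L(X)=-[K,X]$. Your passage from $\{A(\omega),K\}'$ to $\{A(\omega),A(\omega)^{\dagger},\mathcal{H}\}'$ on self-adjoint elements, and the identification of irreducibility with triviality of the projection lattice of a von Neumann algebra, are also sound. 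The only cosmetic points: the step $T_{t}(X)=X\ \Leftrightarrow\ L(X)=0$ needs the (standard) remark that a fixed point of a $C_{0}$- or $W^{*}$-semigroup automatically lies in the generator's domain with $L(X)=0$, and conversely; and your caveat about $\mu$-a.e.\ versus everywhere is well placed but harmless here, since the commutant condition is only needed inside the integral defining $V$.
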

Before discussing this result, let us observe that the representation given in (\ref{eq:lindbladEvans}) is equivalent to the standard representation of the Louvillian. This can be easily shown by substituing the formal definition of $V(X)$ with the following
\begin{equation}\label{eq:Vsubstitution}
V(X)=\int_{\Omega}A(\omega)^{\dagger}X A(\omega)\mbox{d}\mu(\omega)\to V(X)=\sum_{i\in I}A_{i}^{\dagger}X A_i.
\end{equation}
This result by Evans may look quite similar to those listed above, however there are few aspects which make it more applicable \emph{a priori}: it does not assume any particular structure for the Lindblad operators or their linear span, it provides a necessary and sufficient condition and it does not assume any particular form of the steady-state solution. In addition, it applies successfully to the simple problem described by the master equation (\ref{eq:LindbladtwolevelsystemDECAY1}). Indeed, given the set $\{\sigma^{-},\,\sigma^{+},\,\mathcal{H}\}$, the commutant $\{\sigma^{-},\,\sigma^{+},\,\mathcal{H}\}'$ contains only operators proportional to the identity $\mathbbm{1}$. This can be easily seen: since any operator $X$ which commutes with both $\sigma^{-}$ and $\sigma^{+}$ also commutes with $\left[\sigma^{+},\,\sigma^{-}\right]\propto \sigma^{z}$, by Schur's Lemma $X=\lambda\,\mathbbm{1}$, because $\left\{\sigma^{+},\,\sigma^{-},\,\sigma^z/2\right\}$ is a $2$-dimensional irreducible representation of the Lie algebra of the $SU(2)$ group \cite{georgi}. Since any operator in the commutant a of a given set must commute with all the elements of the set, all the elements of $\{\sigma^{-},\,\sigma^{+},\,\mathcal{H}\}'$ must be proportional to the identity.\\
However, in order to appreciate the physical consequences and the meaning of this result, we need to spend a few words about what \emph{irreducibility} means. In a recent paper, Baumgartner and Narnhofer \cite{baum2} pointed out that there is a relation between the possibility of decomposing a Hilbert space into invariant subspaces and the presence of dynamical symmetries (see also \emph{e.g.} \cite{albert}). The existence of invariant subspaces means that the Lindblad equation conserves a set of orthogonal projectors $\{P_{i}\}$, with $P_i=P_i^{\dagger}=P_i^2$. This happens if all the operators $\{P_i\}$ commute simultaneously with the Hamiltonian $\mathcal{H}$, with all the dissipators $B_i$ and with their adjoint $B^{\dagger}_i$, \emph{i.e.} the projectors are elements of the commutant $\{B_{i},\,B^{\dagger}_i,\,\mathcal{H}\}'$. Let us suppose to have a LGKS conserving two orthogonal projectors $P_1$ and $P_2$ with orthogonal support \emph{i.e.} projectors on orthogonal subspaces $W_1$ and $W_2$ of the Hilbert space. In this case, our open quantum system has more than a single steady-state configuration. Indeed, since $P_1$ ($P_2$) is conserved by the time-evolution, the semigroup maps density matrices representing states in $W_1$ ($W_2$) into convex superpositions of states in $W_1$ ($W_2$). Therefore, in the large time limit any state starting in $W_1$ ($W_2$) converges to a steady-state configuration in $W_1$ ($W_2$).
When the only projector conserved by the time-evolution is the identity (trivial projection operation), it is not possible to find a proper subspace of the total Hilbert space which is invariant under the action of the semigroup. In this case, the semigroup is \emph{irreducible}, the commutant $\{B_{i},\,B^{\dagger}_i,\,\mathcal{H}\}'$ contains only operators proportional to the identity and most importantly the steady-state is unique.\\
It is worth noting that a different definition of irreducibility was given by Davies in \cite{daviesstochastic}, while discussing the properties of quantum stochastic processes. In \cite{daviesstochastic} the author has shown results which implies that a dynamical semigroup on a Hilbert space $\mathbb{H}$ is irreducible if and only if there is no proper closed subspace $\mathbb{K}\subset \mathbb{H}$ invariant under the dynamics (see also \emph{e.g.} \cite{schirmer}). As observed by the same Evans in \cite{evans}, Spohn in \cite{spohn2} and by Frigerio in \cite{frigerio1}, these two definitions are strictly related. Indeed, while Evans derived his results by considering the action of the semigroup on operators \emph{i.e.} working in Heisenberg picture, Davies derived his results by working in Schr\"{o}dinger picture \emph{i.e.} by considering the effects of the open quantum dynamics directly on the Hilbert space.\\ 
In addition, it is worth noting at this point that the scenario would have been different for $T\neq 0$. For $T\neq 0$, the master equation  governing our simple two-level system reads
\begin{equation}\label{eq:LindbladtwolevelsystemDECAY2}
\begin{split}
\frac{d}{dt}\rho_{\mathcal{S}}=&-i\left[\mathcal{H}'_{S},\rho_{\mathcal{S}}\right]+\gamma_{\downarrow}\left[\sigma^{-}\rho_{\mathcal{S}}\sigma^{+}-\frac{1}{2}\{\sigma^{+}\sigma^{-},\rho_{\mathcal{S}}\}\right]+\\
&+\gamma_{\uparrow}\left[\sigma^{+}\rho_{\mathcal{S}}\sigma^{-}-\frac{1}{2}\{\sigma^{-}\sigma^{+}\rho_{\mathcal{S}}\}\right],
\end{split}
\end{equation} 
where $\gamma_{\downarrow}=\gamma(1+N(\bar{\omega}))$, $\gamma_{\uparrow}=\gamma N(\bar{\omega})$ and $N(\bar{\omega})$ is the mean occupation of the harmonic bath at temperature $T$ in correspondence of a energy gap between the two-levels of $\bar{\omega}$. In this case, since the two Lindblad operators $\sigma^+$ and $\sigma^-$ are related by  $(\sigma^-)^{\dagger}=\sigma^{+}$ and since their commutant is trivial, we easily see that the result derived by Spohn in \cite{spohn2} and by Frigerio in \cite{frigerio2} guarantee that the dynamical semigroup is relaxing and that the unique steady-state configuration is faithful.\\

\section{Theoretical results $\&$ Applications}\label{sec:theorems}
In the previous section we discussed the most relevant and most cited papers concerning the determination of criteria that guarantee the uniqueness of the steady-state solution of the LGKS equation, the relaxingness of a dynamical semigroup and its irreducibility. As shown, the uniqueness of the steady-state is a necessary and sufficient condition for the relaxingness. The same is true for irreducibility. In particular, we have shown that among all the classical papers concerning these subjects the result provided by Evans \cite{evans} is the only capable of predicting the behavior of the simplest open quantum system \emph{i.e.} a two-level system coupled to a Markovian bath of harmonic oscillators. The success of this result relies on the particular structure of the Lindblad operators entering in the master equation (\ref{eq:LindbladtwolevelsystemDECAY1}).
Inspired by this fact we have derived a series of practical sufficient criteria which guarantee the irreducibility of the dynamical semigroups governing single and composite open quantum systems. In addition, we stress that as a consequence of what discussed above, our results guarantee also the uniqueness of the steady-state solution and its attractivity \emph{i.e.} any starting condition approaches to the same fixed point as the time goes to infinity.\\ 
A complete and rigorous proof of the theorems shown below is given in the Appendix.\\

\subsection{Single open quantum systems}

Let us consider a single quantum system described by a Hilbert space $\mathbb{H}$ having finite dimension $d$ \emph{i.e.} $\mbox{dim}\left(\mathbb{H}\right)=d$,
and let us consider the following two operators
\begin{equation}\label{eq:ladderoperatorform}
\bar{B}=\sum_{k=1}^{d-1} (\bar{B})_{k+1,\,k} \vert k+1 \rangle \langle k \vert ,\quad (\bar{B})_{k+1,\,k}\neq \, 0\, \forall\, k,
\end{equation}
\begin{equation}\label{eq:ladderoperatorformpiu}
\bar{B}^{\dagger}=\sum_{k=1}^{d-1} (\bar{B}^{\dagger})_{k,\,k+1} \vert k \rangle \langle k+1 \vert,\quad(\bar{B}^{\dagger})_{k,\,k+1}\neq \, 0\, \forall\, k 
\end{equation} 
where $\{\vert k \rangle \}$ represents a basis of the Hilbert space $\mathbb{H}$ in the standard Dirac notation.  
We are now in the position of stating our first result:
\begin{theorem}\label{th:primo} 
Given a system described by a Hilbert space $\mathbb{H}$ with finite dimension $d$, whose open quantum dynamics is governed by a dynamical semigroup with generator given in the LGKS form, if there exists a linear complex combination
\begin{equation}
M=\alpha_0 \mathcal{H}_{\mathcal{S}}+\sum_{i\,\in I}\left(\alpha_i B_i+\beta_i  B^{\dagger}_i \right)
\end{equation}
and a unitary operator $U$ such that 
\begin{equation}
\bar{B}=U^{\dagger}MU\quad \mbox{or}\quad\bar{B}^{\dagger}=U^{\dagger}MU,
\end{equation}
then the dynamical semigroup $\Lambda_t=exp(Lt)$ is irreducible.
\end{theorem}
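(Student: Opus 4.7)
The plan is to invoke Evans's theorem, quoted earlier in the text: taking the invariant von Neumann algebra to be $\mathcal{B}(\mathbb{H})$ itself, irreducibility of $\Lambda_{t}=e^{Lt}$ is equivalent to $\{B_{i},B_{i}^{\dagger},\mathcal{H}_{\mathcal{S}}\}'=\mathbb{C}\mathbbm{1}$. So the whole task reduces to showing that any bounded $X$ commuting simultaneously with every $B_{i}$, every $B_{i}^{\dagger}$ and $\mathcal{H}_{\mathcal{S}}$ is a scalar multiple of the identity.

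Let such an $X$ be given. It automatically commutes with every linear combination of the generators; in particular $[X,M]=0$. Because $\mathcal{H}_{\mathcal{S}}^{\dagger}=\mathcal{H}_{\mathcal{S}}$, the adjoint
\begin{equation}
M^{\dagger}=\alpha_{0}^{*}\mathcal{H}_{\mathcal{S}}+\sum_{i\in I}\left(\beta_{i}^{*}B_{i}+\alpha_{i}^{*}B_{i}^{\dagger}\right)
\end{equation}
has exactly the same form as $M$, so $[X,M^{\dagger}]=0$ as well. Assume first that $\bar{B}=U^{\dagger}MU$, so that $U^{\dagger}M^{\dagger}U=\bar{B}^{\dagger}$; conjugating both commutation relations by $U$ and setting $Y:=U^{\dagger}XU$, I obtain $[Y,\bar{B}]=0$ and $[Y,\bar{B}^{\dagger}]=0$. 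The alternative case $\bar{B}^{\dagger}=U^{\dagger}MU$ yields the same two commutations with the roles of $M$ and $M^{\dagger}$ exchanged.

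The core step is then to prove that any $Y$ commuting with both $\bar{B}$ and $\bar{B}^{\dagger}$ must be a multiple of $\mathbbm{1}$, using only the nonvanishing of the subdiagonal entries of $\bar{B}$. First, since $(\bar{B}^{\dagger})_{k,k+1}=\overline{(\bar{B})_{k+1,k}}\neq 0$, the operator $\bar{B}^{\dagger}$ annihilates $|1\rangle$ and is injective on $\mathrm{span}\{|2\rangle,\dots,|d\rangle\}$, so $\ker(\bar{B}^{\dagger})=\mathbb{C}|1\rangle$. From $\bar{B}^{\dagger}Y|1\rangle=Y\bar{B}^{\dagger}|1\rangle=0$ I conclude that $Y|1\rangle=\lambda|1\rangle$ for some $\lambda\in\mathbb{C}$. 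Then an elementary induction on $k$, using $\bar{B}Y=Y\bar{B}$ and $\bar{B}|k\rangle=(\bar{B})_{k+1,k}|k+1\rangle$ with $(\bar{B})_{k+1,k}\neq 0$, gives
\begin{equation}
Y|k+1\rangle=(\bar{B})_{k+1,k}^{-1}\,\bar{B}\,Y|k\rangle=\lambda|k+1\rangle,\quad k=1,\dots,d-1.
\end{equation}
Hence $Y=\lambda\mathbbm{1}$ and therefore $X=UYU^{\dagger}=\lambda\mathbbm{1}$, so the commutant is trivial and irreducibility follows from Evans.

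The main obstacle is conceptual rather than computational: one has to notice that the single conjugacy condition in the hypothesis is already enough, because taking the adjoint of $M$ automatically produces a second generator combination conjugate to $\bar{B}^{\dagger}$, and the very rigid ladder structure of the pair $(\bar{B},\bar{B}^{\dagger})$ leaves no room for a nontrivial joint commutant. Once this observation is made, the remaining work amounts to the kernel identification for $\bar{B}^{\dagger}$ and a one-line induction, both of which are routine.
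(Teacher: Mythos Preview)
Your argument is correct and follows the same overall architecture as the paper: invoke Evans's criterion, pass to $Y=U^{\dagger}XU$, and reduce everything to the triviality of $\{\bar{B},\bar{B}^{\dagger}\}'$. Where you diverge is in the proof of this last fact (the paper's Proposition~\ref{prop:uno}). The paper expands a generic $X$ as $\sum_{i,j}(X)_{i,j}E_{i,j}$, writes out the two commutators $[X,\bar{B}]$ and $[X,\bar{B}^{\dagger}]$ entrywise, and then works through the resulting linear system column by column and row by row to force all off-diagonal entries to zero and all diagonal entries to coincide. Your route is considerably shorter and more conceptual: identify $\ker(\bar{B}^{\dagger})=\mathbb{C}|1\rangle$ to pin down $Y|1\rangle=\lambda|1\rangle$, then use $[Y,\bar{B}]=0$ and the ladder action $\bar{B}|k\rangle=(\bar{B})_{k+1,k}|k+1\rangle$ to propagate the eigenvalue inductively. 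What you gain is economy and transparency; what the paper's explicit matrix bookkeeping buys is a template that it reuses almost verbatim in the composite-system case (Proposition~\ref{prop:due}), where the block structure $X=\sum_{i,j}E_{i,j}\otimes X_{i,j}$ makes the entrywise viewpoint natural. Your kernel-plus-induction argument would adapt to that setting too, but the paper's choice makes the two proofs visibly parallel.
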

Though at first glance, this result that may look similar to those listed above, it actually provides an easy and extremely applicable criterion which successfully describes relevant problems in physics. Indeed, as we show in the following, a large class of open quantum systems satisfies the constraints required by our theorem.\\
\newline
As a first example, let us consider a $N$-level atom coupled to the radiation field at temperature $T\neq0$. The system is schematically represented in Fig. \ref{fig:Nlevelscheme}.\\ 
\begin{figure}[H]
\centering
\begin{tikzpicture}[scale=0.75]

\draw [->, thick] (-1.2,-4.2)--(-1.2,0.2);
\node at (-0.9,0.15) {\large $E$};

\draw [thick] (0,0)--(2,0);
\node  at (2.55,0) {$\vert N \rangle$};
\draw [thick] (0,-1)--(2,-1);
\draw [<->,dashed,very thick] (1,-1.1)--(1,-1.9);
\node  at (2.85,-1) {$\vert N -1 \rangle$};
\draw [thick] (0,-2)--(2,-2);
\node  at (2.5,-2) {$\vert 3 \rangle$};
\draw [thick] (0,-3)--(2,-3);
\node  at (2.5,-3) {$\vert 2 \rangle$};
\draw [thick] (0,-4)--(2,-4);
\node at (2.5,-4) {$\vert 1 \rangle$};
\draw [red,thick] (0.8,-0.9) to [out=120,in=-90] (0.7,-0.5);
\draw [red,thick,->] (0.7,-0.5) to [out=90,in=-120] (0.8,-0.1);
\draw [red,thick] (0.8,-2.9) to [out=120,in=-90] (0.7,-2.5);
\draw [red,thick,->] (0.7,-2.5) to [out=90,in=-120] (0.8,-2.1);
\draw [red,thick] (0.8,-3.9) to [out=120,in=-90] (0.7,-3.5);
\draw [red,thick,->] (0.7,-3.5) to [out=90,in=-120] (0.8,-3.1);

\draw [blue,thick,<-] (1.2,-0.9) to [out=60,in=-90] (1.3,-0.5);
\draw [blue,thick] (1.3,-0.5) to [out=90,in=-60] (1.2,-0.1);
\draw [blue,thick,<-] (1.2,-2.9) to [out=60,in=-90] (1.3,-2.5);
\draw [blue,thick] (1.3,-2.5) to [out=90,in=-60] (1.2,-2.1);
\draw [blue,thick,<-] (1.2,-3.9) to [out=60,in=-90] (1.3,-3.5);
\draw [blue,thick] (1.3,-3.5) to [out=90,in=-60] (1.2,-3.1);
\end{tikzpicture}
\caption{Representation of a N-level atom interacting with the radiation field at non zero temperature $T\neq 0$. Levels close in energy (E) interact exchanging photons with the radiation field: absorption and emission processes are represented respectively by arrows going upwards and downwards.}\label{fig:Nlevelscheme}
\end{figure}
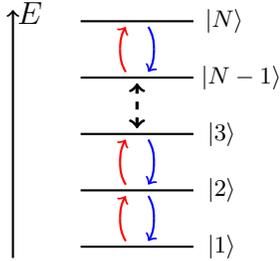 
Interactions with the radiatin field, as depicted in Fig.\ref{fig:Nlevelscheme}, drive a $N$-level atom into a steady state configuration which does not depend on the initial configuration of our open quantum system. 
This means that the underlying time-evolution is governed by an irreducible dynamical semigroup. This can be shown using Theorem \ref{th:primo}. Emission and absorption processes are represented respectively by Lindblad operators having the following structure
\begin{equation}\label{eq:NlevelDissipators}
B_i= \vert i \rangle \langle i+1 \vert,\quad i\in [1,\cdots,N-1],
\end{equation}
\begin{equation}\label{eq:NlevelDissipatorsPiu}
B^{\dagger}_i= \vert i+1 \rangle \langle i \vert,\quad i\in [1,\cdots,N-1],
\end{equation}
where $\{\vert i \rangle \}$ denote the $N$ levels of our atom.\\
It is easy to see that provided the set of operators in (\ref{eq:NlevelDissipators}) and in (\ref{eq:NlevelDissipatorsPiu}), the linear combination $M=\sum_{i=1}^{N-1}B_i$ satisfies the hypothesis of our theorem, proving that the dynamics is governed by an irreducible dynamical semigroup.\\
Moreover, our result applies directly to those systems where at least one among the Lindblad operators $B_i$ has the structure (\ref{eq:ladderoperatorform}) or (\ref{eq:ladderoperatorformpiu}): 
\begin{corollary}\label{cor:coruno}
If at least one among the Lindblad operators entering in the LGKS equation has a structure unitarily equivalent to that given in (\ref{eq:ladderoperatorform}) or (\ref{eq:ladderoperatorformpiu}), then the dynamical semigroup is irreducible.
\end{corollary}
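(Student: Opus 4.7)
The plan is to derive this corollary as an immediate specialization of Theorem \ref{th:primo}. Theorem \ref{th:primo} allows $M$ to be \emph{any} linear combination of $\mathcal{H}_{\mathcal{S}}$, the $B_i$, and the $B_i^{\dagger}$; so if one particular Lindblad operator already has the target structure, it suffices to pick out that operator with a degenerate linear combination and quote the theorem.

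Concretely, suppose $B_{i_0}$ satisfies $B_{i_0} = U\bar{B}U^{\dagger}$ (the case $B_{i_0}=U\bar{B}^{\dagger}U^{\dagger}$ is handled symmetrically). I would choose the coefficients $\alpha_0=0$, $\beta_i=0$ for all $i$, $\alpha_{i_0}=1$, and $\alpha_i=0$ for $i\neq i_0$. Then
\begin{equation}
M \;=\; \alpha_0\mathcal{H}_{\mathcal{S}} + \sum_{i\in I}\bigl(\alpha_i B_i + \beta_i B_i^{\dagger}\bigr) \;=\; B_{i_0},
\end{equation}
and by hypothesis $U^{\dagger}MU = U^{\dagger}B_{i_0}U = \bar{B}$. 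The premise of Theorem \ref{th:primo} is therefore met with exactly this $M$ and $U$, and irreducibility of $\Lambda_t$ follows. The symmetric case produces $U^{\dagger}MU=\bar{B}^{\dagger}$, which is likewise covered by the theorem's ``or'' clause.

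There is essentially no obstacle, since the corollary is a one-line reduction: the difficult work is entirely contained in Theorem \ref{th:primo}. The only point worth stating carefully is that ``unitarily equivalent'' here means equivalent to the standard form (\ref{eq:ladderoperatorform}) or (\ref{eq:ladderoperatorformpiu}) in some basis, i.e.\ that the implementing unitary $U$ is the same one appearing in the hypothesis of Theorem \ref{th:primo}, so no further change-of-basis argument is needed. I would close by remarking that this trivial specialization is nevertheless useful in practice, because checking whether a single given Lindblad operator acts as a weighted ladder operator on some orthonormal basis is often immediate from the physical definition of the dissipative channel.
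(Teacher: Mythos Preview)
Your proposal is correct and is exactly the route the paper intends: the corollary is presented immediately after Theorem~\ref{th:primo} as a direct consequence (``our result applies directly\ldots''), with no separate proof given, precisely because setting $M=B_{i_0}$ reduces it to the theorem. There is nothing to add.
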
 
For example, this happens in spin systems or in boson systems with truncated Fock space, if there is a decay channel driven respectively by a ladder operator or by a annihilation or annihilation operator.
Indeed, the structure of $S^{-}$
\begin{equation}
S^{-}=\sum_{m=-S+1}^{S}\sqrt{(S+m)(S-m+1)}\,\vert\left.m-1\right\rangle\left\langle\,m\right.\vert,
\end{equation}
and the structure of the annihilation operator $A$
\begin{equation}
A=\sum_{n=1}^{N_{max}}\,\sqrt{n} \,\vert \left. n-1\right\rangle\left\langle\, n \right.\vert,
\end{equation}
are exactly as the one required by our theorem, 
where $S$ is the spin value, $N_{max}$ is the number of states in the Fock basis, and $\{\vert m \rangle\}$ and $\{\vert n\rangle\}$ denote respectively the eigenstates of the z-component of the spin $S_z$ and the number operator $A^{\dagger}A$.\\
It is worth noting that the truncation of the Fock space in boson systems is a standard procedure in numerical simulations. Indeed, most of the numerical results on these systems are obtained by checking how results scale while increasing the Fock space dimension. Our results guarantee that a numerical search for the steady-state solution always leads to a unique solution, independently by the number of states kept to reach the desired convergence.\\
In the following section we generalise these results to the case of composite quantum systems.\\ 
\subsection{Composite open quantum systems} 
Let us consider a physical system described by a \emph{total} Hilbert space $\mathbb{H}$ that is the tensor product of a finite number $N$ of Hilbert spaces $\mathbb{H}_i$ having finite dimension $d_i$, that is $\mathbb{H}=\bigotimes_{i=1}^{N}\mathbb{H}_{i}$, being $d_i=\mbox{dim}(\mathbb{H}_{i})$. 
Let us consider the following set of operators $\{\bar{B}^{(N)}_i;\,i=1,\cdots,\,N\}$ 
\begin{equation}\label{eq:B1}
\bar{B}^{(N)}_{1}=\bar{B}_1\,\otimes\mathbbm{1}_{d_2} \otimes \mathbbm{1}_{d_{3}} \otimes \cdots \otimes \mathbbm{1}_{d_N},
\end{equation}
\begin{equation}\label{eq:Bj}
\bar{B}^{(N)}_{j}=1_{d_1}\otimes\cdots\otimes \mathbbm{1}_{d_{j-1}}\otimes\,\bar{B}_j\,\otimes\mathbbm{1}_{d_{j+1}}\otimes\cdots\otimes \mathbbm{1}_{d_N},\,\,(1<j<N)
\end{equation}
\begin{equation}\label{eq:Bn}
\bar{B}^{(N)}_{N}=\mathbbm{1}_{d_1} \otimes \mathbbm{1}_{d_2} \otimes \cdots \otimes \mathbbm{1}_{d_{N-1}}\otimes \bar{B}_N\,,
\end{equation}
where $\{\bar{B}_i\}$ are an operators having the structure in (\ref{eq:ladderoperatorform}) and $\mathbbm{1}_{d_i}$ represents the $d_i$-dimensional identity. The symbol ``$\otimes$" in Eqs.(\ref{eq:B1})-(\ref{eq:Bn}) denotes the \emph{Kronecker product} between operators.
\begin{theorem}\label{th:secondo} 
Given a composite system described by a total Hilbert space $\mathbb{H}$ made up of a finite number of finite dimensional constituents \emph{i.e.} $\mathbb{H}=\bigotimes_{i=1}^{N}\mathbb{H}_{i}$ and $\mbox{dim}(\mathbb{H}_{i})=d_i < \infty$, whose open quantum dynamics is governed by a dynamical semigroup with generator given in the LGKS form, if there exists a set of linear complex combinations $\{M_j;\,j\in[1,\cdots,N]\,\}$
\begin{equation}
M_j=\alpha_{0,\,j} \mathcal{H}_{\mathcal{S}}+\sum_{i\,\in I}\left(\alpha_{i,\,j} B_i+\beta_{i,\,j}  B^{\dagger}_i \right)
\end{equation}
and a unitary operator $U$ such that 
\begin{equation}
\bar{B}^{(N)}_j=U^{\dagger}M_jU\quad \mbox{or}\quad\bar{B}^{(N)\,\dagger}_{j}=U^{\dagger}M_jU,
\end{equation}
then the dynamical semigroup $\Lambda_t=exp(Lt)$ is irreducible, being the $\{\bar{B}^{(N)}_{j}\}$ specified in eqs.(\ref{eq:B1})-(\ref{eq:Bn}) and being the $\{\bar{B}^{(N)\dagger}_{j}\}$ their adjoint.\\
\end{theorem}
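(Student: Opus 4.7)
The plan is to reduce Theorem~\ref{th:secondo} to the single-system result (Theorem~\ref{th:primo}) through a tensor-product commutant argument, using Evans's criterion as the master reduction step. By Evans's theorem, the dynamical semigroup $\Lambda_t = e^{tL}$ is irreducible if and only if the commutant $\{B_i,\,B_i^{\dagger},\,\mathcal{H}_{\mathcal{S}}\}'$ equals $\mathbb{C}\mathbbm{1}$. So it suffices to show that any operator $X$ commuting simultaneously with every $B_i$, $B_i^{\dagger}$ and $\mathcal{H}_{\mathcal{S}}$ must be a scalar multiple of the identity on $\mathbb{H} = \bigotimes_i \mathbb{H}_i$.

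Fix such an $X$. Since each $M_j$ is a complex linear combination of $\mathcal{H}_{\mathcal{S}}, B_i, B_i^{\dagger}$, we immediately have $[X,M_j]=0$; taking adjoints and observing that $M_j^{\dagger}$ is again a linear combination of the same operators, I also obtain $[X,M_j^{\dagger}]=0$. Introducing $Y := U^{\dagger} X U$ and conjugating by $U$, both relations transport to $[Y,\,U^{\dagger}M_j U]=0$ and $[Y,\,U^{\dagger}M_j^{\dagger}U]=0$ for every $j$. By the hypothesis of the theorem $U^{\dagger}M_j U$ equals $\bar{B}_j^{(N)}$ or $\bar{B}_j^{(N)\dagger}$, and its adjoint realizes the other option; hence $Y$ commutes with both $\bar{B}_j^{(N)}$ and $\bar{B}_j^{(N)\dagger}$ for every $j \in \{1,\ldots,N\}$.

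The remaining ingredient is a tensor-product commutant computation. The argument underlying Theorem~\ref{th:primo} already guarantees that on a single factor the ladder structure of $\bar{B}_j$ forces $\{\bar{B}_j,\bar{B}_j^{\dagger}\}'_{\mathbb{H}_j}=\mathbb{C}\mathbbm{1}_{d_j}$. The standard finite-dimensional commutation identity $(\mathcal{A}\otimes\mathbbm{1})' = \mathcal{A}'\otimes\mathcal{B}(\mathbb{H}_{\text{rest}})$ for von Neumann algebras then yields
\begin{equation}
\{\bar{B}_j^{(N)},\bar{B}_j^{(N)\dagger}\}' = \mathcal{B}(\mathbb{H}_1)\otimes\cdots\otimes\mathbb{C}\mathbbm{1}_{d_j}\otimes\cdots\otimes\mathcal{B}(\mathbb{H}_N).
\end{equation}
Intersecting these subalgebras over $j=1,\ldots,N$ forces the $\mathbb{C}\mathbbm{1}_{d_j}$ factor at \emph{every} tensor slot, so $Y=\lambda\mathbbm{1}$ and therefore $X=UYU^{\dagger}=\lambda\mathbbm{1}$, establishing triviality of the commutant and hence irreducibility.

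The main obstacle I expect is making the tensor-factor commutant identity fully rigorous: one must either invoke the finite-dimensional commutation theorem for von Neumann algebras, or argue by hand by slicing $Y$ along orthonormal bases of each $\mathbb{H}_k$ and showing, factor by factor, that commutation with $\bar{B}_j^{(N)}$ and $\bar{B}_j^{(N)\dagger}$ kills all off-diagonal structure on the $j$-th tensor slot and equalizes its diagonal. The subsidiary technical point is checking that the single-factor commutant of $\{\bar{B}_j,\bar{B}_j^{\dagger}\}$ is genuinely trivial, which should follow by analyzing $[\bar{B}_j,\bar{B}_j^{\dagger}]$ (a diagonal operator in the ladder basis) together with the off-diagonal intertwining constraints; once these two algebraic facts are in place the rest of the proof is essentially bookkeeping via the unitary $U$ and Evans's criterion.
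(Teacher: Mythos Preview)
Your proof is correct and structurally parallels the paper's: both reduce, via Evans's criterion and the unitary conjugation by $U$, to showing that any $Y$ commuting with all $\bar{B}_j^{(N)}$ and $\bar{B}_j^{(N)\dagger}$ is a scalar. The difference lies in how that last step is carried out.

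The paper proves the triviality of $\{\bar{B}_j^{(N)},\bar{B}_j^{(N)\dagger};\,j=1,\dots,N\}'$ by an explicit induction on $N$ (its Proposition~2): one writes $Y=\sum_{i,j}E_{i,j}\otimes Y_{i,j}$ as a block matrix with respect to the first tensor factor, uses commutation with the operators on factors $2,\dots,N$ together with the inductive hypothesis to force each block $Y_{i,j}$ to be a scalar multiple of $\mathbbm{1}_{d^{N-1}}$, and then reruns the single-factor argument (Proposition~1) on the resulting $d\times d$ matrix of scalars to conclude. This is exactly the ``slicing along bases'' route you mention as the fallback.

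Your main line instead invokes the finite-dimensional commutation theorem $(\mathcal{A}\otimes\mathbbm{1})'=\mathcal{A}'\otimes\mathcal{B}(\mathbb{H}_{\mathrm{rest}})$, combined with the single-factor fact $\{\bar{B}_j,\bar{B}_j^{\dagger}\}'=\mathbb{C}\mathbbm{1}_{d_j}$ (which is the paper's Proposition~1), and then intersects over $j$. This is cleaner and more conceptual, and it makes transparent why one needs one ladder-type operator per tensor factor; the paper's inductive computation is more elementary and fully self-contained, requiring no appeal to von Neumann algebra machinery. Both routes ultimately rest on the same single-factor lemma, so your identification of that lemma and of the tensor-commutant identity as the two ingredients needing justification is exactly right.
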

In addition, we have the following corollary
\begin{corollary}\label{cor:cordue}
If at least $N$ among the Lindblad operators entering in the LGKS equation are unitarily equivalent to the operators $\{\bar{B}^{(N)}_{i}\}$ or to their adjoint $\{\bar{B}^{(N)\dagger}_{i}\}$, then the dynamical semigroup is irreducible.
\end{corollary}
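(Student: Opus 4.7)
The plan is to derive Corollary \ref{cor:cordue} as a direct specialization of Theorem \ref{th:secondo}. By hypothesis there exist a unitary operator $U$ and an $N$-element subset of the Lindblad operators in the LGKS generator, say $B_{i_1},\ldots,B_{i_N}$, such that for each $j\in\{1,\ldots,N\}$ one has either $U^{\dagger}B_{i_j}U=\bar{B}^{(N)}_{j}$ or $U^{\dagger}B_{i_j}U=\bar{B}^{(N)\dagger}_{j}$. The entire task is then to exhibit linear complex combinations $M_{j}$ meeting the premise of Theorem \ref{th:secondo}.

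First I would take the trivial combinations that single out exactly one Lindblad operator per slot, namely
\begin{equation*}
M_{j}=B_{i_j},\qquad j=1,\ldots,N,
\end{equation*}
obtained by choosing $\alpha_{0,j}=0$, $\alpha_{k,j}=\delta_{k,i_j}$ and $\beta_{k,j}=0$ in the generic linear combination appearing in the statement of Theorem \ref{th:secondo}. Next I would verify the two alternatives of the theorem separately: if $U^{\dagger}B_{i_j}U=\bar{B}^{(N)}_{j}$, then $\bar{B}^{(N)}_{j}=U^{\dagger}M_{j}U$ and the first alternative is satisfied; if instead $U^{\dagger}B_{i_j}U=\bar{B}^{(N)\dagger}_{j}$, then $\bar{B}^{(N)\dagger}_{j}=U^{\dagger}M_{j}U$ and the second alternative is satisfied. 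In either case the premise of Theorem \ref{th:secondo} holds verbatim, so the dynamical semigroup $\Lambda_{t}=e^{tL}$ is irreducible. By the equivalence between irreducibility, uniqueness of the steady state, and relaxingness recalled in Section \ref{sec:review}, this also delivers the attractivity of the unique fixed point.

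The only point on which I would dwell is the tacit assumption that a \emph{single} unitary $U$ must conjugate the whole collection of $N$ selected Lindblad operators into the standard tensor-ladder form \emph{simultaneously}; Theorem \ref{th:secondo} uses one common $U$ for the entire set $\{M_{j}\}$, so the corollary must be read with the same uniformity requirement. Once this is made explicit at the outset, no genuine obstacle remains: the argument is a one-line reduction, and all the technical content is absorbed into Theorem \ref{th:secondo} itself, whose detailed proof is deferred to the Appendix.
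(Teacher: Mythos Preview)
Your proposal is correct and matches the paper's intended approach: the paper states Corollary~\ref{cor:cordue} immediately after Theorem~\ref{th:secondo} without a separate proof, precisely because it is the trivial specialization you describe (choose $M_j=B_{i_j}$ with $\alpha_{0,j}=0$, $\alpha_{k,j}=\delta_{k,i_j}$, $\beta_{k,j}=0$). Your remark that a single common unitary $U$ must be used for all $N$ operators is also well taken and consistent with the hypotheses of Theorem~\ref{th:secondo}.
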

These results provide simple sufficient conditions for composite quantum systems which can be exploited in many different cases. Theorem \ref{th:secondo} applies to the master equation describing the open time-evolution of a collection of $N$ $M$-level atoms coupled to the radiation field. If one considers each atom as coupled to a bath, the Lindblad operator inducing the decay of the excited state $\vert i+1 \rangle$ of the $j$-th atom to the state $\vert i\rangle$ has the following structure
\begin{equation}\label{eq:Nleveldecaymanybody}
B_{i,\,j}= \underbrace{\mathbbm{1}_{M}\otimes \cdots\mathbbm{1}_{M}}_{j-1\,\mbox{times}}\otimes \vert i \rangle \langle i+1 \vert\otimes\underbrace{\mathbbm{1}_{M}\otimes\cdots\otimes \mathbbm{1}_{M}}_{N-j\,\mbox{times}}
\end{equation}
and it is easy to see that the following linear combination 
\begin{equation}
\begin{split}
M_j&=\sum_{i=1}^{N-1}B_{i,\,j}=\\
&=\mathbbm{1}_{M}\otimes \cdots\mathbbm{1}_{M}\otimes \sum_{i=1}^{N-1} \vert i \rangle \langle i+1 \vert\otimes\mathbbm{1}_{M}\otimes\cdots\otimes \mathbbm{1}_{M}.\\
\end{split}
\end{equation}
has the structure specified in (\ref{eq:Bj}), meaning that the underlying dynamics is governed by an irreducible semigroup.\\
Theorem \ref{th:secondo} and Corollary \ref{cor:cordue} find a direct application to the physics of open quantum lattices of bosons and spins. As pointed out in the introduction, these systems have been intensively studied during the last decade (see \emph{e.g.} \cite{lee,rota,jin2,jin,lebreuilly}). 
In addition, we have to stress that theorem \ref{th:secondo} applies also to cases where the constituents do have different properties \emph{i.e.} they represent different physical subsystems. This is for example the case of \cite{biella} and \cite{restrepo}. In the first paper the authors consider an array of coupled cavities in presence of incoherent driving and dissipation, where each site is a Kerr nonlinear resonator coupled to a two-level emitter pumped incoherently. In the latter, the effects of driving and dissipation terms on a fully coupled hybrid optomechanical system where all mutual couplings between a two-level atom, a confined photon mode and a mechanical oscillator mode have been considered. In both cases, the dissipation channels are induced by ladder and annihilation operators. Corollary \ref{cor:cordue} guarantees that the steady-state configuration is unique.\\
\section{Summary and Conclusions}\label{sec:conclusions}
We have briefly reviewed the most relevant theoretical results developed during the last forty years concerning the uniqueness of the steady-state solution of the LGKS master equation and the determination of conditions under which a dynamical semigroup is relaxing or irreducible. In particular, we have discussed in details how these results apply to the simplest open quantum system \emph{i.e.} a two-level system coupled to a reservoir of harmonic oscillators at zero temperature, showing that only the result provided by Evans \cite{evans} is capable of justifying the well-known behavior of this prototypical model. In doing so, we have shown also how these issues are related. The uniqueness of the steady-state is a necessary and sufficient condition for having a relaxing dynamical semigroup. The same is true for irreducibility.\\
In addition, we have provided a set of sufficient conditions (rigorously proved in the Appendix) which can be exploited to characterise the asymptotic properties of quantum dynamical semigroups. By considering the linear combinations of the Lindblad operators in the LGKS equation and by using simple results from the theory of operator algebras, we have shown that it is possible to determine if a dynamical semigroup is irreducible \emph{i.e.} the steady-state solution is unique and attractive. Theorem \ref{th:primo} and Corollary \ref{cor:coruno} provide simple criteria for single quantum systems. These results have been generalised in Theorem \ref{th:secondo} and Corollary \ref{cor:cordue} to address also the properties of composite open quantum systems. Our results find a direct application in the characterisation of a wide variety of problems that recently have been under investigation, such as $d$-dimensional lattice systems of spins and bosons, and heterogeneous systems such as hybrid optomechanical systems.\\

\appendix*
\section{Proofs of Theorems in \ref{sec:theorems}}\label{app:proofs}
In this section we give an explicit proof of the two theorems discussed in section \ref{sec:theorems}. 
The key point of our discussion is the following. Since by Evans results \cite{evans} a dynamical semigroup governing the time-evolution is \emph{irreducible} if and only if $\{\mathcal{H}_{\mathcal{S}},\,B_{i},B^{\dagger}_{i};\, i\in I\}'=\lambda \mathbbm{1}$, if one is able to prove the triviality of the commutant, then the uniqueness of the steady-state follows. This task is considerably simplified if one is able to find a subset $\{\bar{B},\,\bar{B}^{\dagger}\}$ of $\{\mathcal{H}_{\mathcal{S}},\,B_{i},B^{\dagger}_{i};\, i\in I\}$ whose commutant is trivial. Indeed, given a set $\Omega$ and $\bar{\Omega}\subseteq \Omega$, the commutant of $\Omega$ is contained in $\bar{\Omega}$ \emph{i.e.} $\Omega ' \subseteq \bar{\Omega}'$ \cite{black}. This means that if $\bar{\Omega}'=\lambda \mathbbm{1}$, since the commutant of any set contains the identity and its multiples and $\lambda \mathbbm{1}=\bar{\Omega}'\supseteq \Omega '$, we have that $\Omega '= \lambda \mathbbm{1}$. By applying this idea to our problem we derived the theorems discussed in the main text.\\
Since in the following we use this structure many times, let us rephrase the operators $\bar{B}$ and $\bar{B}^{\dagger}$ as follows:
\begin{equation}\label{eq:ladderoperatorformapp}
\bar{B}=\sum_{k=1}^{d-1} (\bar{B})_{k+1,\,k} E_{k+1,\,k},\quad (\bar{B})_{k+1,\,k}\neq \, 0\, \forall k\in\left[1,\,d-1\right],
\end{equation}
\begin{equation}\label{eq:ladderoperatorformpiuapp}
\bar{B}^{\dagger}=\sum_{k=2}^{d} (\bar{B}^{\dagger})_{k-1,\,k} E_{k-1,\,k},\quad(\bar{B}^{\dagger})_{k-1,\,k}\neq \, 0\, \forall k\in\left[2,\,d\right],
\end{equation}
where $E_{i,\,j}$ denotes the $d \times d$ square matrix, whose elements are all zero except for the element in the $i$-th row and $j$-th column whose value is 1.\\
The first step for proving our theorems is given by the following proposition:
\begin{prop}\label{prop:uno}
Given the operators $\bar{B}$ and its adjoint $\bar{B}^{\dagger}$ in the form (\ref{eq:ladderoperatorformapp}) and (\ref{eq:ladderoperatorformpiuapp}), the commutant $\left\{\bar{B},\,\bar{B}^{\dagger}\right\}'$ is trivial, that is\\
\begin{equation}
\left\{\bar{B},\,\bar{B}^{\dagger}\right\}'=\lambda\, \mathbbm{1}_{d}\,
\end{equation}
\end{prop}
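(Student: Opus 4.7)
The plan is to exploit the ladder structure of $\bar{B}$ and $\bar{B}^{\dagger}$ by identifying $|1\rangle$ as a distinguished cyclic vector and then propagating the action of any $M \in \{\bar{B},\bar{B}^{\dagger}\}'$ out from it.

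First I would examine $\ker \bar{B}^{\dagger}$. From the representation \eqref{eq:ladderoperatorformpiuapp} one sees directly that $\bar{B}^{\dagger}|j\rangle = (\bar{B}^{\dagger})_{j-1,j}\,|j-1\rangle$ for $j \geq 2$ and $\bar{B}^{\dagger}|1\rangle = 0$. Because every coefficient $(\bar{B}^{\dagger})_{j-1,j}$ is nonzero by hypothesis, a vector $v=\sum_j v_j |j\rangle$ lies in $\ker\bar{B}^{\dagger}$ only when $v_j=0$ for all $j\geq 2$. Hence $\ker\bar{B}^{\dagger} = \mathbb{C}|1\rangle$, a one-dimensional subspace.

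Next, let $M$ be any operator with $[M,\bar{B}]=0$ and $[M,\bar{B}^{\dagger}]=0$. Commutation with $\bar{B}^{\dagger}$ implies that $M$ leaves $\ker \bar{B}^{\dagger}$ invariant, so $M|1\rangle = \lambda |1\rangle$ for some $\lambda \in \mathbb{C}$. The second step is to observe that $|1\rangle$ is cyclic for $\bar{B}$: from \eqref{eq:ladderoperatorformapp}, $\bar{B}|k\rangle = (\bar{B})_{k+1,k}|k+1\rangle$ with a nonzero coefficient, so iterating gives $\bar{B}^{j-1}|1\rangle = \bigl(\prod_{k=1}^{j-1}(\bar{B})_{k+1,k}\bigr)|j\rangle$, a nonzero multiple of $|j\rangle$ for every $j=1,\dots,d$.

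Combining these, for arbitrary $j$,
\begin{equation}
M|j\rangle = \frac{1}{\prod_{k=1}^{j-1}(\bar{B})_{k+1,k}}\, M\bar{B}^{j-1}|1\rangle = \frac{\bar{B}^{j-1}M|1\rangle}{\prod_{k=1}^{j-1}(\bar{B})_{k+1,k}} = \lambda |j\rangle,
\end{equation}
where the middle equality uses $[M,\bar{B}^{j-1}]=0$, which follows inductively from $[M,\bar{B}]=0$. Since the $\{|j\rangle\}$ form a basis, $M=\lambda\,\mathbbm{1}_d$, proving $\{\bar{B},\bar{B}^{\dagger}\}' = \mathbb{C}\,\mathbbm{1}_d$.

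I do not expect any serious obstacle here; the proof reduces to two elementary facts (simplicity of the kernel of $\bar{B}^{\dagger}$ and cyclicity of $|1\rangle$ under $\bar{B}$), both made available by the nonvanishing of the subdiagonal entries. The only care needed is to handle the boundary cases $j=1$ and $j=d$ in the indexing of \eqref{eq:ladderoperatorformapp}--\eqref{eq:ladderoperatorformpiuapp}, which the above argument already accounts for.
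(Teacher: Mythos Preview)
Your argument is correct and complete. It is, however, genuinely different from the paper's proof. The paper proceeds by brute-force matrix-element computation: it expands a generic $X=\sum_{i,j}(X)_{i,j}E_{i,j}$, writes out $[X,\bar{B}^{\dagger}]$ and $[X,\bar{B}]$ explicitly in the $E_{l,i}$ basis, and then eliminates the entries of $X$ column by column and row by row, showing at step $k$ that the off-diagonal entries in the $k$-th row and column vanish and that $(X)_{k,k}=(X)_{k-1,k-1}$. Your route is more structural: you isolate the one-dimensional kernel of $\bar{B}^{\dagger}$, use invariance of that kernel under any commuting $M$ to pin down $M|1\rangle=\lambda|1\rangle$, and then propagate via cyclicity of $|1\rangle$ under $\bar{B}$. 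Your approach is shorter and makes transparent \emph{why} the commutant is trivial (simple kernel plus cyclic vector), whereas the paper's explicit index-chasing, though lengthier, has the virtue that its block-matrix bookkeeping recycles almost verbatim in the inductive proof of Proposition~\ref{prop:due} for composite systems.
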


\begin{proof}
Let us consider a generic operator $X\,\in\mathcal{B}(\mathbb{H})$. It can be decomposed in terms of the set $\left\{E_{i,\,j}\right\}$ in the following way:
\begin{equation}\label{eq:generalmatrixX}
X=\sum_{i,\,j=1}^{d} (X)_{i,\,j}\,E_{i,\,j}
\end{equation}
By using the two representations in Eq.(\ref{eq:ladderoperatorformpiuapp}) and Eq.(\ref{eq:ladderoperatorformapp}) and the commutation rule for the matrices $\left\{E_{i,\,j}\right\}$, that is
\begin{equation}\label{eq:commutatorEiEj}
\left[E_{i,\,j},\,E_{w,\,k}\right]=\delta_{j,\,w}\, E_{i,\,k} - \delta_{k,\,i}\, E_{w,\,j},
\end{equation}
we obtain that 
\begin{equation}\label{eq:commpiu}
\begin{split}
\left[X,\,\bar{B}^{\dagger}\right]=&\sum_{l=1}^{d}\sum_{i=2}^{d}(X)_{l,\,i-1}\,(\bar{B}^{\dagger})_{i-1,\,i}\,E_{l,\,i}-\\
&-\sum_{l=1}^{d-1}\sum_{i=1}^{d}(X)_{l+1,\,i}\,(\bar{B}^{\dagger})_{l,\,l+1}\,E_{l,\,i},\\
\end{split}
\end{equation}
and
\begin{equation}\label{eq:commmeno}
\begin{split}
\left[X,\,\bar{B}\right]=&\sum_{l=1}^{d}\sum_{i=1}^{d-1}(X)_{l,\,i+1}\,(\bar{B})_{i+1,\,i}\,E_{l,\,i}-\\
&-\sum_{l=2}^{d}\sum_{i=1}^{d}(X)_{l-1,\,i}\,(\bar{B})_{l,\,l-1}\,E_{l,\,i}.\\
\end{split}
\end{equation}
The operator $X$ belongs to the commutant $\left\{\bar{B},\,\bar{B}^{\dagger}\right\}'$ if and only if the two matrices in Eq.(\ref{eq:commpiu}) and in Eq.(\ref{eq:commmeno}) are both equal to $0_{d}$, that is the $d\times d$  matrix with all entries equal to zero. Since the matrices $E_{l,\,i}$ are all linearly independent, this happens if and only if the coefficient multiplying each matrix $E_{l,\,i}$ is 0.\\
Since in Eq.(\ref{eq:commpiu}) the first summation does not depend by $i=1$ and by hypothesis all the $(\bar{B}^{\dagger})_{i-1,\,i}\neq 0$ (see Eq.(\ref{eq:ladderoperatorformpiu})), we have that $(X)_{l+1,\,1}=0,\,\forall\,l\in [1,\,d-1]$ (all the elements in the first column of the matrix $X$ except $(X)_{1,\,1}$ are zero). In the same way, by considering Eq.(\ref{eq:commmeno}), since the summation in the second term does not depend on $l=1$, all the coefficients for $l=1$ in the first term must be equal to zero. This means that $(X)_{1,\,i+1}=0,\,\forall\,i\in[1,\,d-1]$ (all the elements in the first row of the matrix $X$ except $(X)_{1,\,1}$ are zero). \\
Let us now consider in Eq.(\ref{eq:commpiu}) the coefficients for $i=2$. In this case, for $l=1$ we obtain
\begin{equation}
(X)_{1,\,1}(\bar{B}^{\dagger})_{1,\,2}-(X)_{2,\,2}(\bar{B}^{\dagger})_{1,\,2}=0\,\Rightarrow\, (X)_{1,\,1}=(X)_{2,\,2},
\end{equation}
while for $l\,>\,1$, since $(X)_{l+1,\,1}=0,\,\forall\,l\in [1,\,d-1]$, we have that $(X)_{l+1,\,2}=0,\,\forall\,l\in [2,\,d-1]$. In the same way, by considering $l=2$ in Eq.(\ref{eq:commmeno}), we obtain again that $(X)_{1,\,1}=(X)_{2,\,2}$ from $i=2$ and since $(X)_{1,\,i+1}=0,\,\forall\,i\in[1,\,d-1]$ we obtain $(X)_{l,\,i+1}=0,\,\forall\,i\in[2,\,d-1]$. This means that after the steps described above the only entries of the matrix $X$ in the first two columns and rows which could be non-zero are the diagonal entries $(X)_{1,\,1}$ and $(X)_{2,\,2}$. In addition, $(X)_{1,\,1}=(X)_{2,\,2}$. If we now consider the terms for $i=3$ in Eq.(\ref{eq:commpiu}) and the terms for $l=3$ in Eq.(\ref{eq:commmeno}), we obtain that $(X)_{2,\,2}=(X)_{3,\,3}$ and that $(X)_{l+1,\,3}=0,\,\forall\,l\in[3,\,d-1]$ and $(X)_{3,\,i+1}=0,\,\forall\,i\in[3,\,d-1]$.\\
In this way at the step $k$ ( $k>2 )$, from Eq.(\ref{eq:commpiu}) by setting $i=k$ one obtains
\begin{equation}
\begin{split}
\left(\left[X,\,\bar{B}^{\dagger}\right]\right)^{k}=&\sum_{l=1}^{d}(X)_{l,\,k-1}\,(\bar{B}^{\dagger})_{k-1,\,k}\,E_{l,\,k}-\\
&-\sum_{l=1}^{d-1}(X)_{l+1,\,k}\,(\bar{B}^{\dagger})_{l,\,l+1}\,E_{l,\,k},
\end{split}
\end{equation}
while from Eq.(\ref{eq:commmeno}) by setting $l=k$ one gets
\begin{equation}
\begin{split}
\left(\left[X,\,\bar{B}\right]\right)_{k}=&\sum_{i=1}^{d-1}(X)_{k,\,i+1}\,(\bar{B})_{i+1,\,i}\,E_{k,\,i}-\\
&-\sum_{i=1}^{d}(X)_{k-1,\,i}\,(\bar{B})_{k,\,k-1}\,E_{k,\,i},
\end{split}
\end{equation}
where $(O)^{k}$ and $(O)_{k}$ denote respectively the $k$-th column and the $k$-th row of the matrix $O$.\\
Thanks to the constraint obtained by the previous $k-1$ steps, these two equations reduce to
\begin{equation}
(X)_{k-1,\,k-1}=(X)_{k,\,k}
\end{equation}
and to 
\begin{equation}
\begin{split}
&(X)_{l+1,\,k}=0,\,\forall\,l\in[k,\,d-1]\,,\\
&(X)_{k,\,i+1}=0,\,\forall\,i\in[k,\,d-1].
\end{split}
\end{equation}
In this way, after a finite number of steps, we obtain that the matrix $X$ is diagonal and proportional to $\mathbbm{1}_d$. For the arbitrariness in the choice of $X$, we conclude that any matrix which commutes simultaneously with both $\bar{B}$ and $\bar{B}^{\dagger}$ \emph{must} be of the form $X=\lambda\,\mathbbm{1}_{d}$, with $\lambda\in\mathbb{C}$. Therefore, we have that
\begin{equation}
\left\{\bar{B},\,\bar{B}^{\dagger}\right\}'=\lambda\, \mathbbm{1}_{d}.
\end{equation}
\end{proof} 
We are now in the position of proving Theorem \ref{th:primo}.\\
\newline
\textbf{Theorem 1.} 
{\em Given a system described by a Hilbert space $\mathbb{H}$ with finite dimension $d$, whose open quantum dynamics is governed by a dynamical semigroup with generator given in the LGKS form, 
if there exists a linear complex combination}
\begin{equation}
M=\alpha_0 \mathcal{H}_{\mathcal{S}}+\sum_{i\,\in I}\left(\alpha_i B_i+\beta_i  B^{\dagger}_i \right)
\end{equation}
{\em and a unitary operator $U$ such that} 
\begin{equation}
\bar{B}=U^{\dagger}MU\quad \mbox{or}\quad\bar{B}^{\dagger}=U^{\dagger}MU,
\end{equation}
{\em then the dynamical semigroup $\Lambda_t=exp(Lt)$ is irreducible.}\newline
\begin{proof}
Let us consider an operator $X$ belonging to the commutant $\{\mathcal{H}_{\mathcal{S}},\,B_{i},B^{\dagger}_{i};\, i\in I\}'$. By definition we have
\begin{equation}
[X,\,B_i]=[X,\,B^{\dagger}_i]=[X,\,\mathcal{H}_{\mathcal{S}}]=0,
\end{equation}
and as a consequence we have also
\begin{equation}\label{eq:commutatorswithM}
[X,\,M]=[X,\,M^{\dagger}]=0.
\end{equation}
By hypothesis there exists a unitary operator $U$ such that $\bar{B}=UMU^{\dagger}$ or $\bar{B}^{\dagger}=UMU^{\dagger}$. Since in the following we use both $\bar{B}$ and $\bar{B}^{\dagger}$, let us suppose to have
\begin{equation}
\bar{B}=UMU^{\dagger},\quad \bar{B}^{\dagger}=UM^{\dagger}U^{\dagger} 
\end{equation}
and let us define $X^{U}=UXU^{\dagger}$. The proof follows the same steps in the other case. By using (\ref{eq:commutatorswithM}), it follows that
\begin{equation}\label{eq:commutatorwithbarbi}
[X^U,\,\bar{B}]=[X^U,\,\bar{B}^{\dagger}]=0.
\end{equation} 
However, we have shown in Proposition \ref{prop:uno} that the condition reported in (\ref{eq:commutatorwithbarbi}) is equivalent to $X^U=\lambda\mathbbm{1}$. This implies that 
\begin{equation}\label{eq:Xdiagonal}
X=U^{\dagger}X^{U}U=\lambda \mathbbm{1}.
\end{equation}
For the arbitrariness in the choice of $X$, we conclude that $\{\mathcal{H}_{\mathcal{S}},\,B_{i},B^{\dagger}_{i};\, i\in I\}'=\lambda \mathbbm{1}$. The thesis follows from the results derived by Evans. 
\end{proof}
Let us now consider composite quantum systems. For the sake of simplicity, let us consider the case of a total Hilbert space $\mathbb{H}$ made up of $N$ subsystems having the same dimension \emph{i.e.} $\mathbb{H}=\bigotimes_{i=1}^{N}\mathbb{H}_i$, with $\mbox{dim}(\mathbb{H}_i)=d$. In order to prove Theorem \ref{th:secondo}, we have first to generalise the results of Proposition \ref{prop:uno} to this new scenario.\\
\begin{prop}\label{prop:due}
Given the set of operators $\{\bar{B}^{(N)}_{i},\,\bar{B}^{(N)\,\dagger}_i;\,i\,\in[1,\,N]\}$ defined by the Eqs. (\ref{eq:B1})-(\ref{eq:Bn}) and by Eqs.(\ref{eq:ladderoperatorformapp})-(\ref{eq:ladderoperatorformpiuapp}), for any $N < +\infty$ we have
\begin{equation}\label{eq:generalformcommutant}
\{\bar{B}^{(N)}_{i},\,\bar{B}^{(N)\dagger}_i;\,i\,\in[1,\,N]\}'=\lambda\,\mathbbm{1}_{d^N}, \quad \mbox{with}\,\lambda\,\in \mathbb{C}.
\end{equation}  
\end{prop}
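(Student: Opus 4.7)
The natural strategy is induction on $N$, with Proposition~\ref{prop:uno} serving as the base case $N=1$. For the inductive step I would fix $X \in \{\bar{B}^{(N)}_i,\bar{B}^{(N)\dagger}_i;\,i\in[1,N]\}'$ and exploit the tensor factorization of the last slot: expand
\begin{equation}
X=\sum_{a,b=1}^{d} Y_{a,b}\otimes E_{a,b},\qquad Y_{a,b}\in\mathcal{B}\Bigl(\bigotimes_{i=1}^{N-1}\mathbb{H}_i\Bigr),
\end{equation}
where the $E_{a,b}$ are matrix units on $\mathbb{H}_N$. The first goal is to force $X=Y\otimes\mathbbm{1}_d$ by imposing the commutation relations with $\bar{B}^{(N)}_N$ and $\bar{B}^{(N)\dagger}_N$, both of which act as the identity on the first $N-1$ tensor factors.

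The key observation is that the calculation carried out in the proof of Proposition~\ref{prop:uno} is purely manipulational: it uses only the commutator identity for matrix units in Eq.~(\ref{eq:commutatorEiEj}), linear independence of the $\{E_{l,i}\}$, and the hypothesis that the super/sub-diagonal entries of $\bar{B}$ are non-zero. Applying that same inductive sweep (first column, first row, then the diagonals), but now with \emph{operator-valued} coefficients $Y_{a,b}$ in place of scalars $(X)_{a,b}$, yields $Y_{a,b}=0$ for $a\neq b$ and $Y_{a,a}=Y_{1,1}\equiv Y$ for every $a$. The legitimacy of this step rests on the linear independence of the tensors $Z\otimes E_{a,b}$ over $Z\in\mathcal{B}(\bigotimes_{i<N}\mathbb{H}_i)$, so that each operator coefficient of a distinct $E_{l,i}$ must vanish separately. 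Thus $X = Y\otimes\mathbbm{1}_d$.

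Once this reduction is in place, the remaining commutation constraints become
\begin{equation}
0=[X,\bar{B}^{(N)}_j]=[Y,\bar{B}^{(N-1)}_j]\otimes\mathbbm{1}_d,\qquad j=1,\dots,N-1,
\end{equation}
together with the analogous relations for the adjoints, because $\bar{B}^{(N)}_j=\bar{B}^{(N-1)}_j\otimes\mathbbm{1}_d$ for $j<N$. Hence $Y$ lies in the commutant of the set $\{\bar{B}^{(N-1)}_j,\bar{B}^{(N-1)\dagger}_j;\,j\in[1,N-1]\}$ on the $(N-1)$-fold tensor product, and the inductive hypothesis gives $Y=\lambda\,\mathbbm{1}_{d^{N-1}}$. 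Combining this with $X=Y\otimes\mathbbm{1}_d$ produces $X=\lambda\,\mathbbm{1}_{d^N}$, which is the claim.

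The main obstacle I anticipate is the bookkeeping needed to justify rigorously that the step-by-step elimination argument of Proposition~\ref{prop:uno} carries over verbatim when the matrix entries are themselves operators on an auxiliary Hilbert space. One has to be explicit that each equation of the form $\sum_{l,i} (\text{coefficient})_{l,i}\otimes E_{l,i}=0$ forces every operator coefficient to vanish separately; only after this is established can one mimic the induction on column/row index $k$ used in the scalar case. Everything else — the reduction to $Y\otimes\mathbbm{1}_d$, the appeal to the $N-1$ case, and the final assembly — is mechanical once that foundational point is spelled out.
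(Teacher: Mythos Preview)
Your proposal is correct and follows the same inductive template as the paper's proof. The one difference is organizational: the paper peels off the \emph{first} tensor factor rather than the last, writing $X=\sum_{i,j}E_{i,j}\otimes X_{i,j}$ with $X_{i,j}\in\mathcal{B}(\mathbb{H}_2\otimes\cdots\otimes\mathbb{H}_N)$, and then applies the two constraints in the opposite order. Commutation with $S_1=\{\bar{B}^{(N)}_i,\bar{B}^{(N)\dagger}_i;\,i\geq 2\}=\{\mathbbm{1}_d\otimes\bar{B}^{(N-1)}_{i-1},\ldots\}$ invokes the inductive hypothesis \emph{first}, forcing every block $X_{i,j}=\lambda^{(i,j)}\mathbbm{1}_{d^{N-1}}$; only then does commutation with $S_2=\{\bar{B}^{(N)}_1,\bar{B}^{(N)\dagger}_1\}=\{\bar{B}\otimes\mathbbm{1}_{d^{N-1}},\ldots\}$ reproduce the scalar equations of Proposition~\ref{prop:uno} for the $\lambda^{(i,j)}$. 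The payoff of this ordering is precisely the point you flagged as the main obstacle: because the blocks are already scalars when the Proposition~\ref{prop:uno} sweep is run, no operator-valued bookkeeping is needed and the argument is literally identical to the $N=1$ case. Your route is equally valid --- the linear-independence argument you sketch does go through --- but the paper's ordering sidesteps that verification entirely.
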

\begin{proof} In order to show the validity of (\ref{eq:generalformcommutant}) we proceed by induction. In Proposition \ref{prop:uno} we have shown that the thesis is true for $N=1$. Let assume the thesis true for $N-1$ and let us consider a matrix $X$ in the set of the bounded operators for the total Hilbert space $\mathbb{H}=\bigotimes_{i=1}^{N}\mathbb{H}_{i}$. The matrix $X$ can be decomposed in the following way
\begin{equation}\label{eq:blockdecomposition}
X=\sum_{i,\,j=1}^{d}E_{i,\,j}\otimes\,X_{i,\,j},
\end{equation} 
where $X_{i,\,j}$ is the $d^{N-1}\times d^{N-1}$ matrix of elements $(X_{i,\,j})_{l,\,m}=(X)_{d^{N-1}(i-1)+l,\,d^{N-1}(j-1)+m}$ with $l,\,m\in\,[1,\,d^{N-1}]$.\\
Let us now define the following two sets
\begin{equation}
S_1=\{\bar{B}^{(N)}_{i},\,\bar{B}^{(N)\,\dagger}_i;\,i\,\in[2,\,N]\}
\end{equation}
and
\begin{equation}
S_2=\left\{\bar{B}^{(N)}_{1},\,\bar{B}^{(N)\,\dagger}_1\right\}.
\end{equation}
All the operators in the set $S_1$ are in the form
\begin{equation}\label{eq:decompset1}
\bar{B}^{(N)}_i=\mathbbm{1}_{d}\otimes\bar{B}^{(N-1)}_{i-1},\quad \bar{B}^{(N)\,\dagger}_i=\mathbbm{1}_{d}\otimes\bar{B}^{(N-1)\,\dagger}_{i-1},
\end{equation}
where $\bar{B}^{(N-1)}_{i-1}$ and $\bar{B}^{(N-1)\,\dagger}_{i-1}$ are the Lindblad operators for $(N-1)$-composite problem, while those in the set $S_2$ are in the form
\begin{equation}
\bar{B}^{(N)}_{1}=\bar{B}\otimes \mathbbm{1}_{d^{N-1}},\quad \bar{B}^{(N)\,\dagger}_{1}=\bar{B}^{\dagger}\otimes \mathbbm{1}_{d^{N-1}}.
\end{equation}
By using the following property of the Kronecker product
\begin{equation}\label{eq:kronprop}
(A\,\otimes\,B)(C\,\otimes\,D)=A\,C \otimes\,B\,D,
\end{equation}
we obtain the following expressions for commutators between $X$ and the elements of $S_1$
\begin{equation}\label{eq:commutatorsS1}
\begin{split}
& [X,\,\bar{B}^{(N)}_k]=\sum_{i,\,j=1}^{d}E_{i,\,j}\,\otimes [X_{i,\,j},\,\bar{B}^{(N-1)}_{k-1}],\\
&[X,\,\bar{B}^{(N)\,\dagger}_k]=\sum_{i,\,j=1}^{d}E_{i,\,j}\,\otimes [X_{i,\,j},\,\bar{B}^{(N-1)\,\dagger}_{k-1}].
\end{split}
\end{equation}
If we now set the two commutators in (\ref{eq:commutatorsS1}) to be equal to $0_{d^{N}}$, since the $E_{i,\,j}$ are linearly independent, this happens if and only if \emph{all} the commutators $[X_{i,\,j},\,\bar{B}^{(N-1)}_{k-1}]$ and $[X_{i,\,j},\,\bar{B}^{(N-1)\,\dagger}_{k-1}]$ are equal to $0_{d^{N-1}}$. However, since we have assumed the thesis to be true for $N-1$, this happens if and only if all the blocks $X_{i,\,j}$ are in the form $X_{i,\,j}=\lambda^{(i,\,j)}\,1_{d^{N-1}}$, with $\lambda^{(i,\,j)}\,\in \mathbb{C}$. \\
Therefore, we have that
\begin{equation}\label{eq:partialX}
X=\sum_{i,\,j=1}^{d}E_{i,\,j}\otimes\,\lambda^{(i,\,j)}\,\mathbbm{1}_{d^{N-1}}.
\end{equation}
Let us now consider the commutator of $X$ with the elements of $S_2$. We have
\begin{equation}\label{eq:commtatorsS2piu}
\begin{split}
[X,\bar{B}^{(N)\dagger}_{1}]=&\sum_{l=1}^{d}\sum_{i=2}^{d}(\bar{B}^{\dagger})_{i-1,\,i}\,E_{l,\,i}\otimes \,X_{l,\,i-1}-\\
&-\sum_{l=1}^{d-1}\sum_{i=1}^{d}(\bar{B}^{\dagger})_{l,\,l+1}\,E_{l,\,i}\otimes \,X_{l+1,\,i},
\end{split}
\end{equation}
and 
\begin{equation}\label{eq:commtatorsS2meno}
\begin{split}
[X,\,\bar{B}^{(N)}_{1}]=&\sum_{l=1}^{d}\sum_{i=1}^{d-1}(\bar{B})_{i+1,\,i}\,E_{l,\,i}\otimes\,X_{l,\,i+1}-\\
&-\sum_{l=2}^{d}\sum_{i=1}^{d}(\bar{B})_{l,\,l-1}\,E_{l,\,i}\otimes\,X_{l-1,\,i},
\end{split}
\end{equation}
where we used the explicit form of the operators $\bar{B}$ and $\bar{B}^{\dagger}$ given respectively in Eq. (\ref{eq:ladderoperatorformapp}) and Eq. (\ref{eq:ladderoperatorformpiuapp}), the property (\ref{eq:kronprop}) and the commutation rule in Eq. (\ref{eq:commutatorEiEj}).\\
Compare now Eq.(\ref{eq:commtatorsS2piu}) with Eq. (\ref{eq:commpiu}), and compare Eq.(\ref{eq:commtatorsS2meno}) with Eq. (\ref{eq:commmeno}). These two set of expressions have the same structure, so it is easy to see that requiring Eq.(\ref{eq:commtatorsS2piu}) and Eq.(\ref{eq:commtatorsS2meno}) to be equal to the matrix $0_{d^{N}}$, leads to a set of equations similar to those obtained in the proof of Proposition \ref{prop:uno}. The only difference is that now these equations give a set of constraint for the matrices $X_{i,\,j}$ \emph{i.e.} on the $\lambda^{(i,\,j)}$.\\
Therefore, after a finite number of steps as in Proposition \ref{prop:uno}, we obtain that the matrix $X$ is proportional to $\mathbbm{1}_{d^N}$. For the arbitrariness in the choice of $X$, we conclude that \emph{any} operator $X$ belonging to the commutant $\{\bar{B}^{(N)}_{i},\,\bar{B}^{(N)\,\dagger}_i;\,i\,\in[1,\,N]\}'$ is proportional to $\mathbbm{1}_{d^N}$.\\
In particular, since we have shown in Proposition \ref{prop:uno} that the thesis is true for $N=1$ and here above we have shown that its validity for $N-1$ implies also the validity of the thesis for $N$, by induction principle we have shown that for any $N<+\infty$:\\
\begin{equation}
\{\bar{B}^{(N)}_{i},\,\bar{B}^{(N)\,\dagger}_i;\,i\,\in[1,\,N]\}'=\lambda\,\mathbbm{1}_{d^N}, \quad \mbox{with}\,\lambda\,\in \mathbb{C}.
\end{equation}
\end{proof}
Though here above we have considered the case of a composite system made up of subsystem havenig the same dimension, we stress that the proof of Proposition \ref{prop:due} does not depend on this assumption and its validity can be extended to general case of $N$ subsystems all having different dimension.\\
We are now in the position to prove Theorem \ref{th:secondo}.\\
\newline
\textbf{Theorem 2.}
{\em Given a composite system described by a total Hilbert space $\mathbb{H}$ made up of a finite number of finite dimensional constituents \emph{i.e.} $\mathbb{H}=\bigotimes_{i=1}^{N}\mathbb{H}_{i}$ and $\mbox{dim}(\mathbb{H}_{i})=d_i < \infty$, whose open quantum dynamics is governed by a dynamical semigroup with generator given in the LGKS form, 
if there exists a set of linear complex combinations $\{M_j;\,j\in[1,\cdots,N]\,\}$}
\begin{equation}
M_j=\alpha_{0,\,j} \mathcal{H}_{\mathcal{S}}+\sum_{i\,\in I}\left(\alpha_{i,\,j} B_i+\beta_{i,\,j}  B^{\dagger}_i \right)
\end{equation}
{\em and a unitary operator $U$ such that} 
\begin{equation}
\bar{B}^{(N)}_j=U^{\dagger}M_jU\quad \mbox{or}\quad\bar{B}^{(N)\,\dagger}_{j}=U^{\dagger}M_jU,
\end{equation}
{\em then the dynamical semigroup $\Lambda_t=exp(Lt)$ is irreducible, being the $\{\bar{B}^{(N)}_{j}\}$ specified in eqs.(\ref{eq:B1})-(\ref{eq:Bn}) and being the $\{\bar{B}^{(N)\dagger}_{j}\}$ their adjoint.}\\
\begin{proof}
Let us consider an operator $X$ belonging to the commutant $\{\mathcal{H}_{\mathcal{S}},\,B_{i},B^{\dagger}_{i};\, i\in I\}'$. By definition, it commutes with all the Lindblad operators and the Hamiltonian. As a consequence, it commutes also with their linear combinations $M_j$. By hypothesis, there exists a unitary operator such that
\begin{equation}\label{eq:commMjdue}
[X^U,\,\bar{B}^{(N)}_j]=0\quad \mbox{or}\quad [X^U,\,\bar{B}^{(N)\,\dagger}_j]=0,
\end{equation}
\newline
being $X^{U}=UXU^{\dagger}$ for all $j$. Thanks to Proposition \ref{prop:due}, we have that $X^{U}=\lambda \mathbbm{1}$. Therefore, $X=U^{\dagger}X^{U}U=\lambda \mathbbm{1}$. As a consequence of the arbitrariness in the choice of $X$, it follows that 
\begin{equation}
\{\mathcal{H}_{\mathcal{S}},\,B_{i},B^{\dagger}_{i};\, i\in I\}'=\lambda \mathbbm{1}_{d_{tot}},
\end{equation}
and the dynamical semigroup is irreducible, being $d_{tot}=\prod_{i=1}^{N}\,d_i$.
\end{proof}

\end{document}